\documentclass[journal]{IEEEtran}
\usepackage[T1]{fontenc}
\usepackage{microtype}
\usepackage{amsmath,amssymb,amsthm}
\usepackage{booktabs}
\usepackage{enumitem}
\usepackage[hidelinks]{hyperref}
\newtheorem{theorem}{Theorem}
\newtheorem{proposition}[theorem]{Proposition}

\newtheorem{lemma}[theorem]{Lemma}
\theoremstyle{definition}

\theoremstyle{remark}

\DeclareMathOperator{\diag}{diag}
\DeclareMathOperator{\rank}{rank}
\DeclareMathOperator{\sech}{sech}
\DeclareMathOperator{\argmin}{arg\,min}
\newcommand{\T}{\mathbb T}
\newcommand{\R}{\mathbb R}
\newcommand{\E}{\mathbb E}
\newcommand{\one}{\mathbf 1}
\newcommand{\ind}{\mathbf 1}
\newcommand{\distT}{d_{\mathbb T}}
\newcommand{\etaN}{\eta_N}
\newcommand{\cM}{\mathcal C_M}
\newcommand{\NM}{\mathcal N_M}
\newcommand{\EM}{\mathfrak E_M}
\newcommand{\eps}{\varepsilon}
\title{Sharp Minimax Limits and Compatibility Spectra for Critical Near-DFT Index-Only Frequency Estimation}
\author{Armon~Rasooli and Mohammad~Sadegh~Narimani%
\thanks{\raggedright ORCID: \mbox{Armon~Rasooli}, \href{https://orcid.org/0009-0002-6583-7090}{0009-0002-6583-7090}; \mbox{Mohammad~Sadegh~Narimani}, \href{https://orcid.org/0009-0004-8648-2238}{0009-0004-8648-2238}.\par}}
\hypersetup{
  pdftitle={Sharp Minimax Limits and Compatibility Spectra for Critical Near-DFT Index-Only Frequency Estimation},
  pdfauthor={Armon Rasooli; Mohammad Sadegh Narimani},
  pdfsubject={Frequency estimation from index-only near-DFT categorical observations},
  pdfkeywords={frequency estimation, channelizer design, categorical observations, asymptotic minimax risk, unitary transforms}
}

\begin{document}
\maketitle
\pagestyle{empty}
\thispagestyle{empty}

\begin{abstract}
An \(M\)-channel discrete Fourier transform (DFT) channelizer routes an
on-grid sinusoid to a single output. When each frame reports only one
energy-proportional channel index, however, signed sub-bin frequency
estimation is nonregular: dark-channel probability is quadratic in the
offset, whereas orientation enters cubically. We study \(N\) independent
labeled reports under a known uniform-replacement probability
\(\varepsilon_N\) and one deterministic unitary shared by all frequencies
and frames, constrained to routing defect \(\tau/N\). If
\(\sqrt N\varepsilon_N\to\lambda<\infty\), we establish an attained
global-in-frequency minimax limit at the critical scales \(N^{-1/4}\) for
frequency offset, \(N^{-1/2}\) for unitary perturbation and replacement, and
\(N^{-1}\) for routing defect. The effective unitary tangent is a symmetric
complete-graph edge field modulo one centering nuisance, and its first
variation is a radius-dependent weighted divergence. For \(\lambda>0\),
evaluation at \(k\) distinct normalized offset magnitudes yields an exact
Fourier--Cauchy nullity spectrum: \(\lfloor(M-1)/2\rfloor\) evaluations are
necessary and sufficient, for every choice of distinct magnitudes, to
certify neutrality at all magnitudes. The terminal nullspace has a
greatest-common-divisor dimension formula and positive-definite aggregate
curvature. Consequently, exact DFT routing is uniquely minimax within the
complete critical tangent class for \(M=3\), whereas every positive critical
defect budget strictly improves the minimax constant for \(M\ge4\). The
analysis also yields a smallest-prime curvature-visibility law and, for
\(M\ge5\), discontinuous compatibility geometry but continuous minimax value
at the zero critical replacement floor \(\lambda=0\).
\end{abstract}

\begin{IEEEkeywords}
frequency estimation, channelizer design, categorical observations,
asymptotic minimax risk, nonregular estimation, unitary transforms, graph
divergence, Cauchy matrices.
\end{IEEEkeywords}

\section{Introduction}

Frequency refinement from DFT coefficients is a basic signal-processing
problem \cite{RifeBoorstyn1974,Kay1989,Quinn1994,AboutaniosMulgrew2005,
Macleod1998}. Quantization changes its
information geometry and can destroy conclusions inherited from full complex
samples \cite{HostMadsenHandel2000}. Here the front end is more
severe: each independent frame produces only one labeled channel index,
sampled according to normalized transform energy. This idealizes randomized
channel reporting or an energy-proportional event sampler. Observing the full
transform vector or all channel powers is a different, more informative
experiment.

More explicitly, a frame transmits one symbol \(J\in\{0,\ldots,M-1\}\),
not the vector \(Ux_\theta\). If \(e_J\) is its one-hot representation, then
\begin{equation}
 \E[e_J\mid\theta,U]=p_\theta^U.\label{eq:onehot}
\end{equation}
The empirical histogram therefore estimates the transform-energy law across
frames while the interface remains one \(M\)-ary label per frame. This is why
the categorical experiment, rather than direct transform readout, is the
object of study.

Exact DFT routing and local Fourier expansions are classical. Nonregular
optimal design and minimax-testing arguments are established in general form
\cite{LinMartinYang2019,Tsybakov2009}, while unitary modulus-square tangents and Fourier
defect arithmetic have independent theories
\cite{TadejZyczkowski2008,Banica2013}. Graph divergence and cycle spaces are
classical \cite{JiangLimYaoYe2011}; Cauchy nonsingularity supplies a rank tool
used below \cite{Schechter1959}; and prime/composite Fourier support has its
own uncertainty theory \cite{DonohoStark1989,Tao2005,Meshulam2006}. A distinct
three/four transition also occurs in unistochastic geometry at the flat
matrix \cite{BengtssonEtAl2005}. These works provide constituent analytical
tools, but they do not characterize the attained minimax constant or the
all-radius compatibility filtration induced by a shared near-routing unitary
under labeled categorical observations.

Algebraically, the exact DFT categorical law is also a finite-register
quantum-phase-estimation readout law. General phase-estimation work optimizes
circuits, probe tapers, signal polynomials, or Bayesian risk
\cite{VanDamEtAl2007,BerryEtAl2009,ChapeauBlondeauBelin2020,
SmithEtAl2025,PatelEtAl2026,JiaLiuDong2026}. Those results
neither impose the present near-routing defect ball nor optimize one shared
final unitary across grid nodes under worst-case fixed-\(M\), repeated-frame
risk. Conversely, our theorems do not optimize probe states, coherent query
complexity, or general quantum measurements.

This work makes two principal contributions. First, it establishes a sharp
global-in-frequency asymptotic minimax law over the \(O(N^{-1})\)
channelization-defect class. The lower bound and a single finite-data
estimator attain the same constant uniformly over frequency, including the
singular zero-critical-replacement regime \(\lambda=0\). Second, for
\(\lambda>0\), it characterizes the complete first-order-neutral subspace of
shared unitary perturbations. The stacked operator evaluated at \(k\)
distinct normalized offset magnitudes has an exact Fourier--Cauchy nullity
spectrum; \(L=\lfloor(M-1)/2\rfloor\) evaluations are necessary and
sufficient, for every choice of distinct magnitudes, to certify neutrality at
all magnitudes, and the terminal kernel and curvature are explicit. These
results yield three structural consequences: an \(M=3\) objective-specific
critical rigidity theorem with a strict \(M\ge4\) improvement alternative; a
smallest-prime curvature-visibility law; and, for \(M\ge5\), a discontinuous
compatibility kernel but continuous optimized minimax value at \(\lambda=0\).
The novelty lies in this joint statistical-design characterization; the
Fourier expansion, graph divergence, Cauchy determinant, and finite-group
arithmetic used in its derivation are classical. Section~II specifies the
experiment, Sections~III--VI develop the canonical geometry and the two
principal results, and Section~VII derives their design consequences.
Detailed uniform remainder bounds and proof details are provided in the
Supplementary Material.

\begin{table*}[t]
\caption{Assumptions and implications of the principal results.}
\label{tab:scope}
\centering
\footnotesize
\begin{tabular}{@{}p{0.10\textwidth}p{0.40\textwidth}p{0.42\textwidth}@{}}
\toprule
Result & Assumptions & Established conclusion \\
\midrule
Minimax limit & Fixed \(M\ge3\); finite \(\tau\) and \(\lambda\); known uniform
replacement with \(\sqrt N\varepsilon_N\to\lambda\); one deterministic
unitary shared by all frequencies and frames; labeled IID categorical reports;
squared circular loss. & The normalized global minimax risk converges to the
attained constant \(\mathcal C_M(\tau,\lambda)\), and a single measurable
estimator attains this limit uniformly over frequency. \\
Compatibility spectrum & Positive critical replacement floor \(\lambda>0\); \(k\)
distinct positive normalized offset magnitudes; complete tangent graph
\(K_M\), with every unordered pair coordinate admissible;
quotient by the common edge translation. & The \(k\)-evaluation neutral
space has the exact Fourier--Cauchy nullity spectrum in
Theorem~\ref{thm:G2}; \(\lfloor(M-1)/2\rfloor\) evaluations are necessary and
sufficient for all-magnitude certification, with an explicit terminal kernel
and curvature. \\
Consequences & The regimes stated in Theorem~\ref{thm:M3global} and
Propositions~\ref{prop:localization}--\ref{prop:decouple}. & Exact DFT routing
is critically rigid for \(M=3\) and strictly improvable for \(M\ge4\);
persistent modes obey the smallest-prime curvature-visibility law; for
\(M\ge5\), the compatibility kernel jumps at \(\lambda=0\) while the
optimized minimax value remains continuous. \\
\bottomrule
\end{tabular}
\end{table*}

The results concern fixed \(M\), known uniform replacement, a single
deterministic shared unitary, and labeled categorical observations. They do
not address growing \(M\), adaptive transforms, unknown or nonuniform
corruption, or full-vector or power observations. The compatibility spectrum
is an exact-rank statement; it does not imply uniform conditioning as the
evaluated offset magnitudes coalesce or as \(\lambda\downarrow0\). Global
optimality refers to the complete critical shrinking-defect class, not to
arbitrary finite-distance unitaries.

\section{Experiment and boundary geometry}

Fix \(M\ge3\), write \(\T=\R/(2\pi\mathbb Z)\), and set
\[
 x_\theta[n]=M^{-1/2}e^{in\theta},\quad n=0,\ldots,M-1.
\]
For \(U\in U(M)\), one frame reports
\[
 J\mid(\theta,U)\sim\operatorname{Categorical}(p_\theta^U),\qquad
 p_\theta^U(j)=|(Ux_\theta)_j|^2.
\]
Independently replace \(J\) by a uniform label with known probability
\(\eps_N\); thus
\(\widetilde p_{\theta,N}^{U}=(1-\eps_N)p_\theta^U+\eps_N\one/M\).
The \(N\) reports are IID and labeled. Risk uses squared circular distance
\(\distT^2\).

Let \(\theta_q=2\pi q/M\), \(X=[x_{\theta_0},\ldots,x_{\theta_{M-1}}]\),
and \(U_0=X^\dagger\). Orthogonality of finite Fourier characters gives
\(U_0x_{\theta_q}=e_q\). Every exact router is \(DPU_0\), with diagonal
unitary \(D\) and permutation \(P\).

For \(\theta=\theta_q+h\), the exact DFT probabilities are
\[
 p_{qj}(h)=
 \left[\frac{\sin(Mh/2)}{M\sin\{\pi(q-j)/M+h/2\}}\right]^2,\quad j\ne q,
\]
and \(p_{qq}(h)=[\sin(Mh/2)/(M\sin(h/2))]^2\). Put
\(\alpha_{qj}=\pi(q-j)/M\). Taylor expansion yields
\begin{align}
 p_{qj}(h)&=a_{qj}h^2+b_{qj}h^3+O(h^4),\label{eq:abexp}\\
 a_{qj}&=\frac1{4\sin^2\alpha_{qj}},\qquad
 b_{qj}=-\frac{\cos\alpha_{qj}}{4\sin^3\alpha_{qj}},\label{eq:ab}\\
 p_{qq}(h)&=1-\frac{M^2-1}{12}h^2+O(h^4).\label{eq:routed}
\end{align}
Thus \(a_{qj}=a_{jq}\), \(b_{qj}=-b_{jq}\), and
\(\sum_{j\ne q}a_{qj}=(M^2-1)/12\). Direct evaluation at \(M=2\) gives
\(p_{q,q+1}(h)=\sin^2(h/2)=p_{q,q+1}(-h)\), an exact collision. For
\(M\ge3\), at least the channel \(j=q+1\) has \(b_{qj}\ne0\). Its first
circular label moment is
\[
 m_1(\theta)=\frac{(M-1)e^{i\theta}+e^{-i(M-1)\theta}}M.
\]
If \(m_1(\theta)=m_1(\phi)\) with \(z=e^{i\theta}\ne w=e^{i\phi}\), division
by \(z-w\) makes a sum of \(M-1\) unit phasors have modulus \(M-1\).
Equality in the triangle inequality forces \(w/z=1\) when \(M\ge3\), a
contradiction. Hence the labeled exact-DFT law is globally injective.

The squared Hellinger distance between the laws at \(\theta_q\pm h\) is
\(\Theta(h^4)\): the probability difference is \(\Theta(h^3)\) while the
corresponding dark mass is \(\Theta(h^2)\). Consequently \(Nh^4\asymp1\)
and \(h\asymp N^{-1/4}\). A uniform floor competes with dark mass precisely
when \(\eps_N\asymp h^2\asymp N^{-1/2}\).

\section{Critical unitary design reduction}

Define the channelization defect
\[
 \delta_{\rm ch}(U)=
 \min_{\pi\in S_M}\max_q\{1-| (Ux_{\theta_q})_{\pi(q)}|^2\},
\]
and \(\mathcal U_N(\tau)=\{U:\delta_{\rm ch}(U)\le\tau/N\}\). If
\(U_N\in\mathcal U_N(\tau)\), phase/permutation alignment gives
\[
 U_N=e^{N^{-1/2}K_N}U_0,\qquad K_N^\dagger=-K_N,\quad \sup_N\|K_N\|<\infty.
\]
Indeed the aligned off-diagonal squared mass is \(O(N^{-1})\), so the
principal logarithm is \(O(N^{-1/2})\).

Let
\[
 v_{qj}=\left.\partial_h(U_0x_{\theta_q+h})_j\right|_{0}
 =\frac{e^{-i\alpha_{qj}}}{2\sin\alpha_{qj}}.
\]
Only the real phase-aligned coordinates
\[
 c_{qj}=\frac{\Re(v_{qj}^*K_{jq})}{a_{qj}}=c_{jq}
\]
enter the signed critical law. Cauchy--Schwarz shows that the least-resource
realization is \(K(c)_{jq}=c_{qj}v_{qj}\) for \(j\ne q\), with
\(K(c)_{qq}=0\). Thus the effective variables form
a symmetric edge field on \(K_M\); unused quadratures consume defect without
improving the limit.

Write
\[
 \theta=\theta_q+Sr\etaN+\kappa\etaN^2,\qquad
 \etaN=N^{-1/4},\quad S\in\{-1,1\}.
\]
The signed dark-cell coefficient depends on \(c_{qj}+\kappa\). Hence
\(c\mapsto c+\beta\one\), \(\kappa\mapsto\kappa-\beta\) is a nuisance
action, and
\[
 \EM=\R^{E(K_M)}/\operatorname{span}\{\one\}.
\]
Its routing quotient norm is
\begin{equation}
 \|[c]\|_{\rm rt}^2=
 \min_{\beta\in\R}\max_q\sum_{j\ne q}a_{qj}(c_{qj}+\beta)^2.\label{eq:rtnorm}
\end{equation}
The minimizing gauge is unique. Indeed, writing the objective in
\eqref{eq:rtnorm} as \(f_c(\beta)\) and setting
\(A_\Sigma=\sum_{j\ne q}a_{qj}=(M^2-1)/12\), the function
\(f_c(\beta)-A_\Sigma\beta^2\) is convex. Its minimizer
\(\beta_\ast\) satisfies
\[
 f_c(\beta)\ge f_c(\beta_\ast)+A_\Sigma|\beta-\beta_\ast|^2.
\]
The quotient norm is zero only for the constant class. Uniformly on bounded
fields,
\(N\delta_{\rm ch}(e^{N^{-1/2}K(c)}U_0)=
\max_q\sum_{j\ne q}a_{qj}c_{qj}^2+o(1)\); boundary fields are shrunk by
\(1-o(1)\) to meet the finite budget exactly.

\section{Canonical decision functional}

Assume \(\lambda_N=\sqrt N\eps_N\to\lambda\in[0,\infty)\). For \(j\ne q\),
the exact probability has the compact-local expansion
\begin{align}
 \widetilde p_{\theta,N}^{U_N}(j)
 &=\etaN^2 A_{N,qj}(r)+S\etaN^3d_{qj}(r,c,\kappa)+o(\etaN^3),\label{eq:cell}\\
 A_{N,qj}(r)&=(1-\eps_N)a_{qj}r^2+\lambda_N/M,\nonumber\\
 d_{qj}(r,c,\kappa)&=r\{b_{qj}r^2+2a_{qj}(c_{qj}+\kappa)\}.\nonumber
\end{align}
Centered dark counts, scaled by \(N^{-1/4}\), therefore have a
finite-dimensional Gaussian limit with diagonal covariance
\(A_{qj}(r,\lambda)=a_{qj}r^2+\lambda/M\) and signed mean \(Sd_q\).
This assertion concerns the sign-sensitive component on the stated local
sets; it is not a claim of unrestricted equivalence of the full experiment.

Profiling the centre nuisance gives
\begin{equation}
 \mu_{q,\lambda}^2(r,[c])=
 \min_{\kappa\in\R}\sum_{j\ne q}
 \frac{r^2\{b_{qj}r^2+2a_{qj}(c_{qj}+\kappa)\}^2}
 {a_{qj}r^2+\lambda/M}.\label{eq:mu}
\end{equation}
At the singular endpoint, define \(\mu^2\) by the radial limit
\(r\downarrow0\) with \(\lambda=0\); at \(r=0\) it enters the risk only
through the vanishing prefactor \(r^2\). The projected sign statistic is
\(Y=\mu S+Z\), \(Z\sim N(0,1)\). With
\[
 \psi(s)=\E[\sech^2(s+\sqrt{s}Z)],
\]
the minimax estimator of \(rS\) is \(r\tanh(\sqrt{s}Y)\), and its risk is
\(r^2\psi(s)\). The function \(\psi\) is continuous and strictly decreasing;
its derivative is the negative posterior-variance square
\cite{GuoWuShamaiVerdu2011}.

The same profile has a useful quotient-distance form.  Put
\begin{equation}
 w_{qj}(r,\lambda)=\frac{a_{qj}^2}{a_{qj}r^2+\lambda/M},\qquad
 u_{qj}(r)=\frac{b_{qj}r^2}{2a_{qj}},\label{eq:wu}
\end{equation}
and
\(
 \|[z]\|_{q,r,\lambda}^2=
 \min_{\alpha\in\R}\sum_{j\ne q}w_{qj}(z_{qj}+\alpha)^2.
\)
Completing the square, with no approximation, gives
\begin{equation}
 \mu_{q,\lambda}^2(r,[c])
 =4r^2\|[c_q]+[u_q(r)]\|_{q,r,\lambda}^2.\label{eq:qdist}
\end{equation}
Thus the statistical design problem is a simultaneous weighted approximation
of the radius curve \(-[u_q(r)]\) by the stars of one shared edge field.

Define
\begin{align}
 R_{q,r,\lambda}([c])&=r^2\psi(\mu_{q,\lambda}^2(r,[c])),\nonumber\\
 F_{M,\lambda}([c])&=\max_q\sup_{r\ge0}R_{q,r,\lambda}([c]),\label{eq:F}\\
 \cM(\tau,\lambda)&=
 \min_{\|[c]\|_{\rm rt}^2\le\tau}F_{M,\lambda}([c]).\label{eq:C}
\end{align}
Uniformly on a resource ball, \(R\to0\) as \(r\downarrow0\) and as
\(r\to\infty\): in the latter limit
\(\mu^2=r^4\sum b_{qj}^2/a_{qj}+O(r^2)\). Thus the radius supremum is
effectively compact, \(F\) is continuous, and the finite-dimensional minimum
in \eqref{eq:C} is attained.

\section{Positive-Critical-Replacement Compatibility Spectrum}

At \([c]=0\), symmetry makes the optimal centre \(\kappa=0\). Differentiating
\eqref{eq:mu} gives the first-variation operator
\begin{equation}
 [\mathcal D_{M,r,\lambda}g]_q=
 4r^4\sum_{j\ne q}
 \frac{a_{qj}b_{qj}}{a_{qj}r^2+\lambda/M}g_{qj}.\label{eq:D}
\end{equation}
Each edge contributes with opposite sign at its endpoints, so
\(\sum_q[\mathcal Dg]_q=0\). More precisely,
\(\mathcal D_{M,r,\lambda}=4r^4B\Omega_{r,\lambda}\), where \(B\) is an
oriented incidence matrix of the nonzero-weight graph and
\(\Omega_{r,\lambda}\) is an invertible diagonal edge weighting on that
graph. The graph is \(K_M\) for odd \(M\), and \(K_M\) minus the diameter
matching for even \(M\); both are connected for \(M\ge3\). Therefore
\begin{equation}
 \operatorname{im}\mathcal D_{M,r,\lambda}=\one^\perp,
 \qquad \rank\mathcal D_{M,r,\lambda}=M-1.\label{eq:Drank}
\end{equation}
Consequently, for every fixed \(r>0\) and every \(\lambda\ge0\), the
single-evaluation quotient kernel has dimension \(M(M-3)/2\). The smaller
positive-floor persistent space arises only by intersecting the neutrality
conditions across distinct offset magnitudes.

Let \(L=\lfloor(M-1)/2\rfloor\). For \(\lambda>0\) and pairwise distinct
normalized offset magnitudes \(r_1,\ldots,r_k>0\), define
\[
 \NM^{(k)}=\bigcap_{s=1}^k
 \ker\overline{\mathcal D}_{M,r_s,\lambda},\qquad
 \NM(\lambda)=\bigcap_{r>0}
 \ker\overline{\mathcal D}_{M,r,\lambda}.
\]
Here \(\overline{\mathcal D}\) is the operator induced on \(\EM\);
paired antisymmetry makes \(\mathcal D\one=0\), so it is well defined.
For \(0\le\ell<M\), set
\begin{equation}
 \iota_\ell=\#\{1\le d\le L:M\mid \ell d\}
 =\left\lfloor\frac{L\gcd(M,\ell)}M\right\rfloor,
 \label{eq:iota}
\end{equation}
with \(\gcd(M,0)=M\).

\begin{theorem}[Positive-floor compatibility spectrum and curvature]
\label{thm:G2}
Fix \(M\ge3\) and \(\lambda>0\).
\begin{enumerate}[label=(\roman*)]
\item For every \(k\ge1\) and every choice of \(k\) distinct positive
normalized offset magnitudes,
\begin{equation}
 \dim_{\R}\NM^{(k)}=
 \sum_{\ell=0}^{M-1}\!\left[L-\min\{k,L-\iota_\ell\}\right]
 +\ind_{\{2\mid M\}}\frac M2-1.\label{eq:spectrum}
\end{equation}
\item Every set of \(L\) distinct positive normalized offset magnitudes
certifies persistence, and \(L\) is sharp for every such choice: for
\(0\le k<L\), with
\(\NM^{(0)}:=\EM\), \(\NM^{(k)}\supsetneq\NM(\lambda)\).  The terminal space
is characterized by
\begin{equation}
 g_{q,q+d}=g_{q,q-d}\quad(q\in\mathbb Z_M, 1\le d<M/2),\label{eq:terminal}
\end{equation}
and
\begin{equation}
 \dim\NM(\lambda)=\sum_{d=1}^{L}\gcd(M,d)
 +\ind_{\{2\mid M\}}\frac M2-1.\label{eq:gcd}
\end{equation}
It is independent of the positive magnitude of \(\lambda\).
\item If \(g\in\NM(\lambda)\), then, exactly for every \(q,r>0,t\in\R\),
\begin{align}
 \mu_{q,\lambda}^2(r,tg)&=\mu_{q,\lambda}^2(r,0)
 +4t^2r^2V_{q,r,\lambda}(g),\label{eq:starid}\\
 V_{q,r,\lambda}(g)&=\min_{\alpha\in\R}\sum_{j\ne q}
 \frac{a_{qj}^2(g_{qj}+\alpha)^2}
 {a_{qj}r^2+\lambda/M}.\label{eq:star}
\end{align}
Here \(V\ge0\), with equality iff the star at \(q\) is constant.  Its
restriction to the persistent space has per-node rank
\(\lfloor M/2\rfloor-1\), and \(\sum_qV_{q,r,\lambda}\) is positive definite
on \(\EM\).
\end{enumerate}
\end{theorem}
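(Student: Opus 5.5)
The plan is to diagonalize the first-variation operator \eqref{eq:D} by the cyclic symmetry $q\mapsto q+1$ of $U_0$, and then to reduce the multi-radius intersection to a Cauchy-matrix rank count in each Fourier mode.

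\textbf{Step 1: circulant form of $\mathcal D$.} The coefficients $a_{qj},b_{qj}$ depend only on $d=j-q\bmod M$. Moreover $a_{-d}=a_d$ and $b_{-d}=-b_d$, and $b_{M/2}=0$ for even $M$. Put $\phi_d(r)=a_db_d/(a_dr^2+\lambda/M)$ and index the edge field by difference classes, $e_d(q)=g_{q,q+d}$ for $1\le d\le L$. Symmetry of $g$ gives $g_{q,q-d}=e_d(q-d)$. Hence
\[
[\mathcal D_{M,r,\lambda}g]_q=4r^4\sum_{d=1}^{L}\phi_d(r)\{e_d(q)-e_d(q-d)\}.
\]
The diameter edges (even $M$) do not enter. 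They contribute $\tfrac M2$ free coordinates to every kernel, and the total edge count $ML+\ind_{\{2\mid M\}}M/2=M(M-1)/2$ checks. Taking the DFT in $q$ with $\omega=e^{2\pi i/M}$, the condition at mode $\ell$ reads
\[
\sum_{d=1}^{L}\phi_d(r)\,(1-\omega^{-\ell d})\,\hat e_d(\ell)=0 .
\]

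\textbf{Step 2: Cauchy rank count.} I will write
\[
\phi_d(r)=\frac{b_d}{r^2+x_d},\qquad x_d=\frac{\lambda}{Ma_d}>0 .
\]
Since $a_d=1/(4\sin^2(\pi d/M))$ is strictly monotone on $1\le d\le L$, the poles $x_d$ are pairwise distinct. This is exactly where $\lambda>0$ is used: at $\lambda=0$ all poles collapse. Also $b_d\ne0$ for $d\le L$.

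Stacking $k$ distinct radii $r_s$ at mode $\ell$ gives the matrix $[1/(r_s^2+x_d)]$ acting on the columns with $M\nmid\ell d$. There are $L-\iota_\ell$ such columns, and the column scalings by $b_d(1-\omega^{-\ell d})$ are nonzero. Every square submatrix of a Cauchy matrix with distinct nodes is nonsingular, so the stacked matrix has rank $\min\{k,L-\iota_\ell\}$ for \emph{every} choice of distinct radii. The nullity at mode $\ell$ is therefore $L-\min\{k,L-\iota_\ell\}$.

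For the real dimension count I will use the conjugate pairing $\ell\leftrightarrow M-\ell$. This converts the complex mode nullities into real dimensions, so summing over $\ell$ gives the real kernel dimension. Adding the diameter coordinates and subtracting $1$ for the quotient by $\one$ gives \eqref{eq:spectrum}; the quotient step uses that $\one$ lies in the kernel because $\mathcal D\one=0$.

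\textbf{Step 3: terminal space and sharpness, part (ii).} Take $k\ge L$. Then the rank is full at every mode, so all $\hat h_d(\ell)=(1-\omega^{-\ell d})\hat e_d(\ell)$ vanish. Equivalently $e_d(q)=e_d(q-d)$, which is \eqref{eq:terminal}. Since $\NM(\lambda)\subseteq\NM^{(k)}$ and $\NM^{(L)}$ already equals the space defined by \eqref{eq:terminal}, certification with any $L$ radii follows.

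The dimension is $\sum_\ell\iota_\ell$, plus the diameter term, minus $1$. Now $\sum_\ell\iota_\ell=\sum_{d\le L}\#\{\ell: M\mid\ell d\}=\sum_{d\le L}\gcd(M,d)$, which gives \eqref{eq:gcd}. This count does not involve $\lambda$, so the dimension is independent of $\lambda>0$. For $k<L$, mode $\ell=1$ has $\iota_1=0$ and nullity $L-k>0$ exceeding its terminal value. This gives the strict inclusion $\NM^{(k)}\supsetneq\NM(\lambda)$.

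\textbf{Step 4: exact curvature, part (iii).} I will work with the quotient form \eqref{eq:qdist}. The quantity $\|[u_q]+t[g_q]\|^2$ expands exactly as a quadratic in $t$, with leading coefficient $V_{q,r,\lambda}(g)$. The cross term needs two observations.
\begin{itemize}[leftmargin=*]
\item The optimal gauge for $u_q$ is $\alpha=0$. This holds because $w$ is even in $d$ and $u$ is odd in $d$, so $\sum_jw_{qj}u_{qj}=0$.
\item The cross term therefore equals $2t\sum_jw_{qj}u_{qj}g_{qj}$, up to the vanishing $\alpha$-correction. By construction this is proportional to $[\mathcal D_{M,r,\lambda}g]_q$, which is zero for $g$ in the persistent space.
\end{itemize}
This yields \eqref{eq:starid}. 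Positivity of $V$ and its vanishing exactly on constant stars follow from $w_{qj}>0$.

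Per-node rank: on $\NM(\lambda)$ the star at $q$ has values that agree pairwise ($d\leftrightarrow-d$). So the star is described by $\lfloor M/2\rfloor$ free values, and the quotient by constants leaves rank $\lfloor M/2\rfloor-1$. I still need to check that these values really are free on the persistent space, for example by exhibiting $\gcd$-periodic fields.

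Definiteness of $\sum_qV_q$: if $\sum_qV_q(g)=0$, then every star is constant, say with value $c_q$ at node $q$. Each edge $qj$ then forces $c_q=g_{qj}=c_j$. Since $K_M$ is connected, $g$ is constant, i.e.\ $[g]=0$ in $\EM$.

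The main obstacle is the real/complex bookkeeping in Step 2. The self-conjugate modes $\ell=0$ and $\ell=M/2$, and the exact accounting of the $\one$-quotient inside the $\ell=0$ mode, must reproduce \eqref{eq:spectrum} exactly. That is where I expect an off-by-one to hide.
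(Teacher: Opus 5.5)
Your proposal is correct and follows essentially the paper's own route. Both arguments use circulant distance coordinates, a DFT in \(q\), and nonsingularity of every square Cauchy minor on the active columns. Both then obtain the recurrence \(y_d(q)=y_d(q-d)\) at \(k=L\) with its \(\gcd(M,d)\) cosets, use character \(\ell=1\) for sharpness, and prove the curvature identity by an exact completion of the square; you do this through the quotient form \eqref{eq:qdist} with \(\sum_j w_{qj}u_{qj}=0\), the paper through the affine profile minimizer in \((t,\kappa)\).

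Neither of your two open items is a real gap. The off-by-one you fear does not occur: a real matrix has the same real and complex rank, and the DFT is invertible. So summing all \(M\) complex block nullities, self-conjugate characters included and with no pairing needed, gives the real kernel dimension directly. The \(\lfloor M/2\rfloor\) star values are indeed free, because the distance-constant fields satisfy \eqref{eq:terminal} trivially and realize arbitrary star values.
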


\begin{proof}
For \(1\le d\le L\), write
\(y_d(q)=g_{\{q,q+d\}}\),
\(a_d=\{4\sin^2(\pi d/M)\}^{-1}\), and
\(\rho_d=\lambda/(Ma_d)\).  The \(a_d\), hence \(\rho_d\), are pairwise
distinct.  After deleting a common nonzero factor, neutrality at radius
\(r_s\) is
\begin{equation}
 \sum_{d=1}^L\frac{\beta_d}{r_s^2+\rho_d}
 \{y_d(q)-y_d(q-d)\}=0,\label{eq:distancebalance}
\end{equation}
where \(\beta_d>0\).  An even diameter has \(b=0\) and contributes exactly
\(M/2\) free unordered edges.

Complexify the real map and apply the DFT in \(q\).  In character \(\ell\),
the \(d\)th column is inactive exactly when \(M\mid\ell d\), giving
\(\iota_\ell\) inactive columns.  On active columns the \(k\)-row matrix is,
up to nonzero column scalings, \((r_s^2+\rho_d)^{-1}\).  Every square minor is
nonzero by the Cauchy determinant \cite{Schechter1959}; hence its rank is
\(\min\{k,L-\iota_\ell\}\).  The real and complex ranks of a real matrix
coincide, so summing the complex block nullities once over all \(M\)
characters, adding the diameter matching, and removing one global constant
proves \eqref{eq:spectrum}.

At \(k=L\), every active block has full column rank, leaving precisely
\(y_d(q)=y_d(q-d)\), which is \eqref{eq:terminal}.  Conversely, that recurrence
kills \eqref{eq:distancebalance} for every radius.  It gives one value on each
of the \(\gcd(M,d)\) cosets generated by \(d\), proving \eqref{eq:gcd}.  If
\(k<L\), character \(\ell=1\) has no inactive column and a kernel of dimension
\(L-k\); adjoining its conjugate character yields a real, nonpersistent
field. Thus sharpness holds for every set of evaluated magnitudes, not merely
generically.

Finally, the pre-profile expression \eqref{eq:mu} is quadratic in
\((t,\kappa)\); its unique minimizer is affine in \(t\).  Persistence removes
the linear coefficient, and substitution \(\kappa=t\alpha\) gives
\eqref{eq:starid}--\eqref{eq:star}.  Positive weights give the equality
condition.  The attainable incident values are indexed by the
\(\lfloor M/2\rfloor\) undirected separations; quotienting a constant star
gives the stated rank.  If every star is constant, shared edges force all
star constants to agree, so the field is quotient-null.
\end{proof}

The first layers illustrate that fixed-radius neutrality can greatly
overestimate genuine persistence:
\begin{equation}
\begin{array}{c|c}
M&\dim\NM^{(1)},\ldots,\dim\NM^{(L)}\\ \hline
5&5,1\\
6&9,5\\
7&14,8,2\\
8&20,13,7.
\end{array}\label{eq:spectrumexamples}
\end{equation}
The displayed dimensions follow directly from \eqref{eq:spectrum}. Exact
rank and numerical conditioning are distinct. For a fixed character with
\(m=L-\iota_\ell\) active distance columns, fixed distinct positive
magnitudes, and \(k\ge m\), a Cauchy--Vandermonde expansion gives
\[
 C_\ell(\lambda)=A_\ell\diag(1,\lambda,\ldots,\lambda^{m-1})B_\ell
 +O(\lambda^m),
 \qquad \lambda\downarrow0,
\]
where \(A_\ell\) has full column rank and \(B_\ell\) is invertible. Hence,
with singular values in decreasing order,
\(\sigma_j(C_\ell(\lambda))\asymp\lambda^{j-1}\), and
\(\kappa_2(C_\ell(\lambda))\asymp\lambda^{-(m-1)}\)
\cite{Demmel2000}. Indeed, fixed invertible row and column transformations
reduce the leading term to a rectangular diagonal matrix, and the
\(O(\lambda^m)\) remainder is of lower order than its smallest nonzero
singular value. The worst block, \(\ell=1\), has \(m=L\). Thus exact
compatibility persists for every \(\lambda>0\), while its conditioning
degenerates at the quantified rate as the positive critical floor approaches
zero. Constants in these comparisons depend on the fixed magnitudes and are
not uniform when they coalesce. At \(\lambda=0\), the poles merge exactly,
which explains why one radius evaluation certifies the larger space in
\eqref{eq:zeroDim}.

At the zero critical replacement floor \(\lambda=0\), all radius rows in
\eqref{eq:D} are proportional. One
positive radius therefore already certifies persistence and, by
\eqref{eq:Drank},
\begin{equation}
 \dim\NM(0)=\binom M2-1-(M-1)=\frac{M(M-3)}2.\label{eq:zeroDim}
\end{equation}
On \(\NM(0)\), the same curvature identity holds, with per-node rank
\(M-3\). The positive-floor Cauchy formula is not evaluated at
\(\lambda=0\); the merged-pole operator has the separate dimension
\eqref{eq:zeroDim}.

For an arbitrary direction the profiling calculation also has the exact
form
\begin{equation}
 \mu_{q,\lambda}^2(r,tg)=\mu_{q,\lambda}^2(r,0)
 +tL_{q,r,\lambda}(g)+t^2Q_{q,r,\lambda}(g),\label{eq:quadraticprofile}
\end{equation}
Thus, along every design direction, the canonical sign strength is quadratic
in the perturbation amplitude; higher-order onsets do not occur.

\section{Sharp Critical Minimax Limit}

Let
\[
 \mathcal R_N^*(\tau,\eps_N)=
 \inf_{U\in\mathcal U_N(\tau)}\inf_{\widehat\theta}
 \sup_{\theta\in\T}\E_{\theta,U}\distT^2(\widehat\theta,\theta),
\]
where \(U\) is deterministic and data-independent, the estimator may use the
known \(U,M,N,\eps_N\), and expectations use the labeled IID categorical law.

\begin{theorem}[Sharp critical design-minimax law]\label{thm:G1}
Fix \(M\ge3\), finite \(\tau\ge0\), and a known sequence
\(0\le\eps_N<1\) satisfying \(\sqrt N\eps_N\to\lambda<\infty\). Assume a
single deterministic unitary shared by all frames, with every unordered pair
coordinate admissible,
uniform independent replacement, and squared circular loss. Then
\begin{equation}
 \boxed{\lim_{N\to\infty}\sqrt N\,\mathcal R_N^*(\tau,\eps_N)
 =\cM(\tau,\lambda).}\label{eq:G1}
\end{equation}
If \([c^*]\) attains \eqref{eq:C}, a least-routing-resource representative
and its phase-aligned tangent produce an asymptotically optimal sequence
\(U_N^*=\exp\{(1-o(1))N^{-1/2}K(c^*)\}U_0\).
\end{theorem}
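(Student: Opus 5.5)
The proof of Theorem~\ref{thm:G1} splits into a lower bound \(\liminf_N\sqrt N\,\mathcal R_N^*\ge\cM(\tau,\lambda)\) and a matching achievability bound \(\limsup_N\sqrt N\,\mathcal R_N^*\le\cM(\tau,\lambda)\). Both reduce to the canonical decision functional \eqref{eq:F} through the design reduction of Section~III and the local expansion \eqref{eq:cell}.

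\emph{Lower bound.} Fix any sequence \(U_N\in\mathcal U_N(\tau)\) and pass to a subsequence attaining the \(\liminf\). After phase and permutation alignment, \(U_N=e^{N^{-1/2}K_N}U_0\) with \(\sup_N\|K_N\|<\infty\). By compactness, extract a further subsequence with \(K_N\to K\). Projecting onto the phase-aligned coordinates gives a field \(c\) whose class satisfies \(\|[c]\|_{\rm rt}^2\le\tau\), because the unused quadratures only consume defect (Cauchy--Schwarz). Next, choose \(q\) and \(r\) nearly attaining the supremum in \eqref{eq:F}. Within a local window \(\theta=\theta_q+Sr\etaN+\kappa\etaN^2\), put a two-point prior on \(S\in\{\pm1\}\) and let \(\kappa\) range over a compact set, treating it as the least-favourable nuisance. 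Two ingredients then apply. First, \eqref{eq:cell} holds uniformly on compact sets of \((r,\kappa)\). Second, Le~Cam's third lemma and convergence of the local binary experiments show that the sign-sensitive part converges to \(Y=\mu S+Z\) with \(\mu=\mu_{q,\lambda}(r,[c])\) from \eqref{eq:mu}; the routed cell and the unsigned \(\Theta(\etaN^2)\) parts are sign-ancillary at this order. The Bayes risk of estimating \(rS\) in the limit is \(r^2\psi(\mu^2)\). Since \(\distT^2\ge\etaN^2 r^2(S-\widehat S)^2\) up to \(o(\etaN^2)\), and since \(\min\) over \(U\) is bounded below by the value at the limiting class, we obtain \(\liminf\sqrt N\mathcal R_N^*\ge F_{M,\lambda}([c])\ge\cM(\tau,\lambda)\).

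The step I expect to be the main obstacle is the profiling over \(\kappa\). The lower bound needs the least-favourable \(\kappa\) to be realized by a prior rather than only pointwise. I would handle this with a Gaussian prior on \(\kappa\) of large variance; the sign problem then sees exactly the projected statistic, whose noncentrality converges to the profiled \(\mu^2\). The same device must also handle the \(\lambda=0\), \(r\downarrow0\) endpoint; there the vanishing prefactor \(r^2\) allows the radius to be truncated away from zero at arbitrarily small cost.

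\emph{Achievability.} Take \([c^*]\) attaining \eqref{eq:C}; attainment follows from the continuity and compactness remarks after \eqref{eq:C}. Choose its unique minimizing gauge and set \(U_N^*=\exp\{(1-o(1))N^{-1/2}K(c^*)\}U_0\); by the defect asymptotics of Section~III, \(U_N^*\in\mathcal U_N(\tau)\). The estimator proceeds in three stages.
\begin{enumerate}[label=(\alph*)]
\item Using a first fraction \(N^{1-\delta}\) of the frames, form a coarse estimate from the first circular label moment, which is globally injective by the argument in Section~II and stays injective under an \(O(N^{-1/2})\) perturbation. This localizes \(\theta\) to the nearest grid node \(q\) and to a magnitude \(|h|\) at rate \(o(\etaN)\) through the dark-mass quadratic term (the \(\eps_N\) floor is known and subtracted). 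Away from the grid nodes the model is regular and the \(N^{-1}\) risk is negligible.
\item Plug the resulting \(\hat r,\hat\kappa\) into the centred dark counts to form the projected sign statistic \(\widehat Y\).
\item Output \(\theta_q+\hat r\etaN\tanh(\hat\mu\widehat Y)\).
\end{enumerate}
Uniform integrability, which holds because the loss is bounded on \(\T\), together with uniform convergence of the statistic over compact \(r\) and the tail control \(R\to0\) as \(r\to0\) or \(r\to\infty\), gives \(\sup_\theta\sqrt N\,\E\distT^2\le F_{M,\lambda}([c^*])+o(1)=\cM(\tau,\lambda)+o(1)\). The estimation error in \(\hat r\) and \(\hat\kappa\) enters only at \(o(\etaN^2)\) in the loss. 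Combining the two bounds proves \eqref{eq:G1} and the optimality of \(U_N^*\).
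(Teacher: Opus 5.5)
\textbf{Lower bound.} Your converse is essentially the paper's. Alignment and subsequence extraction put the limiting class in the resource ball. The two-point sign problem at \((q,r)\) has a binary log-likelihood ratio converging to \(\mathcal N(\pm2\mu^2,4\mu^2)\), and the limiting Bayes risk is \(r^2\psi(\mu^2)\).

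The obstacle you single out is not one, however. Because \(\kappa\) enters the dark cells only through the signed mean \(S\etaN^3d_{qj}(r,c,\kappa)\), you can fix the two points \(\theta_q\pm r\etaN+\kappa^*\etaN^2\), with \(\kappa^*\) the minimizer in \eqref{eq:mu}. Their scaled dark-count means are \(\pm d_{qj}(r,c,\kappa^*)\), and their likelihood-ratio signal is \(\sum_{j\ne q}d_{qj}(r,c,\kappa^*)^2/A_{qj}(r,\lambda)=\mu_{q,\lambda}^2(r,[c])\) exactly. So nature's pointwise choice of centre already realizes the profile, and this is what the paper uses. Your diffuse Gaussian prior reproduces the same projection in the limit. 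But it forces expansions over unbounded \(\kappa\), which contradicts the compact range you also state.

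\textbf{Achievability, first gap: plug-in errors.} The genuine gaps are in achievability, exactly where the paper does its work. You claim that errors in \(\widehat r,\widehat\kappa\) ``enter only at \(o(\etaN^2)\) in the loss.'' That holds for the output \(\theta_q+\widehat r\etaN\tanh(\cdot)\), but not for the sign statistic.

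With an \(N^{1-\delta}\) pilot, \(\widehat r^2-r^2\asymp N^{-1/4+\delta/2}\). Centring the dark counts at \(\widehat r\) therefore shifts each scaled count by order \(N^{\delta/2}a_{qj}\), which diverges. Even the full sample fixes \(r^2\) only to \(O(N^{-1/4})\), which still leaves an order-one bias.

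What rescues the construction is that the profile normal equation makes the weights exactly orthogonal to the radial/centre direction: \(\sum_{j\ne q}w_{qj}(r)a_{qj}=0\) identically in \(r\) (the paper's \eqref{eq:finiteorth}). The radial mean and any centre term then cancel for every plug-in \(\widehat r\). You centre only by the known replacement mass, and no \(\widehat\kappa\) is needed. None is identifiable anyway, since \((r,\kappa)\) parametrizes \(\theta\) redundantly. If ``projected'' was meant in this sense, you must state it and use it as the reason. Also, when \(\lambda=0\) these weights can grow like \(1/r\). You then need a vanishing guard, below which the node is returned, to keep the Berry--Esseen bound finite.

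\textbf{Achievability, second gap: uniformity in frequency.} Uniformity over \(\theta\in\T\) is not established. Convergence on compact \(r\) plus regularity at fixed distance from the grid leaves the band \(R\etaN\le|\theta-\theta_q|\le h_0\) uncovered. There the fixed-\(r\) Gaussian approximation does not apply. Nor is the model regular, since opposite sides are separated only by \(H^2(P_h,P_{-h})\asymp h^4\). The decay of \(r^2\psi(\mu^2)\) as \(r\to\infty\) is a property of the limit functional and says nothing about the finite-\(N\) risk in this band.

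The paper closes the band with a growing window \(R_N=\{\log(N+e)\}^{1/3}\).
\begin{itemize}
\item On that window, a Berry--Esseen bound with bounded-Lipschitz error \(o(R_N^{-2})\) matches the \(O(R_N^2)\) size of the rescaled loss.
\item Cubic coercivity \(s\ge cr^4\) controls large radii.
\item Beyond \(R_N\etaN\), a minimum-Hellinger estimator (Lemma~\ref{lem:U5}) has side-confusion probability at most \((N+1)^Me^{-cNh^4}\le(N+1)^Me^{-cR_N^4}=o(N^{-1/2})\), because \(R_N^4\gg\log N\).
\end{itemize}
Without some splice of this kind, the \(\sup_\theta\) bound \eqref{eq:mainlimsup} does not follow.
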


\emph{Converse.} Any nearly optimal design sequence has, after alignment and
subsequence extraction, an effective \([c]\) in the resource ball. Fix
\((q,r)\), choose the minimizing centre in \eqref{eq:mu}, and consider
\begin{equation}
 \theta_{N,\pm}=\theta_q\pm r\eta_N+\kappa^*\eta_N^2.\label{eq:twopoints}
\end{equation}
Map the signed displacement of any circular estimator from
\(\theta_q+\kappa^*\eta_N^2\), divided by \(\eta_N\), to \([-r,r]\) by metric
projection. For all large \(N\), both parameters lie in one injectivity arc,
so projection cannot increase loss to either endpoint. The action space is
now compact and the rescaled loss bounded. The finite multinomial
log-likelihood ratio for \eqref{eq:twopoints} converges under both signs to
the nuisance-projected scalar Gaussian likelihood; bounded two-experiment
risk transfer \cite{LeCamYang2000,VanderVaart1998} therefore gives
\begin{equation}
 \liminf_N\max_{S=\pm1}\sqrt N\,
 \E_S\distT^2(\widehat\theta_N,\theta_{N,S})
 \ge r^2\psi\{\mu_{q,\lambda}^2(r,[c])\}.\label{eq:twopointlower}
\end{equation}
This statement uses only the sign component proved above, not an unrestricted
Gaussian equivalence. Maximizing \((q,r)\), minimizing over the compact
resource ball, and applying the argument to every convergent subsequence give
\begin{equation}
 \liminf_N\sqrt N\,\mathcal R_N^*(\tau,\eps_N)
 \ge\cM(\tau,\lambda).\label{eq:mainliminf}
\end{equation}

\emph{Achievability.} A complete construction uses
\begin{align}
 m_N&=\lceil N^{3/4}\rceil,&
 R_N&=\{\log(N+e)\}^{1/3},\nonumber\\
 \zeta_N&=N^{-1/32},&
 d_N&=N^{-7/64}.\label{eq:seq}
\end{align}
The guard \(\zeta_N\) is applied uniformly over all limiting replacement
regimes, although it is analytically needed only at the zero critical floor.
Let \(c_N\) range over
a fixed bounded body of routing-minimizing representatives and put
\begin{equation}
 A^{\rm fin}_{N,qj}(r)=(1-\eps_N)a_{qj}r^2+\lambda_N/M.\label{eq:Afinite}
\end{equation}
The exact finite profiled strength is
\begin{equation}
\begin{aligned}
 s_{N,q}(r,c_N)&=\min_{\kappa\in\R}\sum_{j\ne q}
 \frac{(1-\eps_N)^2r^2}{A^{\rm fin}_{N,qj}(r)}\\
 &\quad\times
 \{b_{qj}r^2+2a_{qj}(c_{N,qj}+\kappa)\}^2 .
\end{aligned}\label{eq:sfinite}
\end{equation}
At the sole singular endpoint \(r=\lambda_N=0\), define \(s_{N,q}\) by the
radial limit \(r\downarrow0\) with \(\lambda_N=0\); the product
\(r^2\psi(s_{N,q})\) is zero there.  The weights below
are evaluated only on the guarded branch \(r\ge\zeta_N>0\); below the guard
the rule returns the grid node.  Thus no \(0/0\) expression is evaluated.
No limiting denominator is substituted in \eqref{eq:sfinite}. Denote its
minimizer by \(\kappa_N^*(r,c_N)\) and define
\begin{equation}
 w^{\rm fin}_{N,qj}(r,c_N)=
 \frac{(1-\eps_N)r\{b_{qj}r^2+2a_{qj}(c_{N,qj}+\kappa_N^*)\}}
 {A^{\rm fin}_{N,qj}(r)} .\label{eq:wfinite}
\end{equation}
The exact normal equation is
\begin{equation}
 \sum_{j\ne q}w^{\rm fin}_{N,qj}(r,c_N)a_{qj}=0.\label{eq:finiteorth}
\end{equation}
It removes the leading radial mean and the second-order centre nuisance after
plug-in without imposing a convergence rate on \(\lambda_N\).

Use the pilot sample to obtain \(\widehat q\) and \(\widehat r^2\), with
deterministic smallest-index tie breaking, and define
\begin{equation}
 \mathcal E_N(q,r)=\{\widehat q=q,
 |\widehat r^2-r^2|\le d_N\}.\label{eq:pilotgood}
\end{equation}
Uniformly over the bounded design body and \(0\le r\le3R_N\), Bernstein
bounds give \(\sqrt N\Pr\{\mathcal E_N(q,r)^c\}\to0\). For the independent
main counts \(C_j\), conditionally on the pilot, set
\begin{equation}
 T_N=N^{-1/4}\sum_{j\ne\widehat q}
 w^{\rm fin}_{N,\widehat qj}(\widehat r,c_N)
 \{C_j-(N-m_N)\eps_N/M\}.\label{eq:Tfinite}
\end{equation}

\begin{lemma}[Finite-profile Gaussian approximation and uniform risk
transfer]\label{lem:U4}
Assume only \(\lambda_N=\sqrt N\eps_N\to\lambda\in[0,\infty)\). On the good
pilot event, uniformly in \(q\), the bounded design body, and the guarded
window \(\zeta_N\le r\le3R_N\),
\begin{equation}
 \mathcal L(T_N\mid\mathcal E_N)
 =\mathcal N\{S s_{N,q}(r,c_N),s_{N,q}(r,c_N)\}+o_{\rm BL}(1).
 \label{eq:finiteCLT}
\end{equation}
If \(R_{N,q,r}^{\rm fin}(c_N)\) denotes the normalized risk of the resulting
plug-in rule \(\widehat{rS}=\widehat r\tanh T_N\), with the grid returned
below the universal guard, then
\begin{equation}
 \sup_{\substack{q,\ c_N\in\mathcal K_B\\0\le r\le3R_N}}
 \left|R_{N,q,r}^{\rm fin}(c_N)-
 r^2\psi\{\mu_{q,\lambda}^2(r,[c_N])\}\right|\longrightarrow0.
 \label{eq:risktransfer}
\end{equation}
Here \(\mathcal K_B\) is any fixed compact body of routing-minimizing
representatives. In particular, \eqref{eq:risktransfer} requires no rate for
\(\lambda_N\to\lambda\).
\end{lemma}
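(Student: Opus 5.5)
The plan is to condition on the pilot, which is independent of the main counts. On \(\mathcal E_N(q,r)\) the estimator is then a fixed linear statistic of a multinomial vector with known cell probabilities. I would first compute the conditional mean and variance of \(T_N\) exactly. The main sample has size \(N'=N-m_N=N(1-o(1))\). For \(j\ne q\), the cell mean after subtracting the uniform floor is \(N'(1-\eps_N)p^{U_N}_\theta(j)\). By the compact-local expansion \eqref{eq:cell}, taken at the finite level without substituting \(\lambda\), this equals \(N'\{\etaN^2(1-\eps_N)a_{qj}r^2+S\etaN^3(1-\eps_N)d_{qj}\}\) plus a remainder. The remainder must be uniform over the bounded body and the window \(r\le3R_N\); it is \(O(N'\etaN^4R_N^4)\), which is small after the \(N^{-1/4}\) scaling because \(R_N\) grows only logarithmically.

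Next I would contract with the weights \(w^{\rm fin}_{N,qj}(\widehat r,c_N)\). The normal equation \eqref{eq:finiteorth} annihilates the \(\etaN^2 a_{qj}r^2\) term, so the leading radial mean vanishes. It also annihilates the centre-nuisance contribution \(2a_{qj}\kappa\), which means the unknown \(\kappa\) in \(\theta\) drops out. What remains is \(S\sum_j w^{\rm fin}_{N,qj}(1-\eps_N)r\{b_{qj}r^2+2a_{qj}(c_{N,qj}+\kappa^*_N)\}\), and by the definition of \(w^{\rm fin}\) this is exactly \(S s_{N,q}(r,c_N)\).

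The variance is diagonal to leading order, since multinomial covariances among dark cells are \(O(\etaN^4)\). It equals \(N^{-1/2}\sum_j (w^{\rm fin})^2 N'\etaN^2A^{\rm fin}_{N,qj}\), which is again \(s_{N,q}\). A Lindeberg/Berry--Esseen bound for sums of independent indicators then gives \eqref{eq:finiteCLT} in bounded-Lipschitz distance, uniformly, because the cell variances are bounded below on the guarded window by \(\etaN^2\min(a\zeta_N^2,\lambda_N/M)\).

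The main obstacle is the plug-in error from \(\widehat r\) and \(\widehat q\). On \(\mathcal E_N\), \(\widehat q=q\) and \(|\widehat r^2-r^2|\le d_N\). Perturbing \(r\) inside the weights changes the first-order radial term \(N^{-1/4}\sum_j w(\widehat r)\,N'\etaN^2a_{qj}r^2\). By \eqref{eq:finiteorth} evaluated at \(\widehat r\), this collapses to \(N^{1/4}\sum_j w(\widehat r)a_{qj}(r^2-\widehat r^2)=0\) exactly. This exact cancellation is why the finite normal equation is imposed, and it is where I expect the difficulty to lie. The remaining error is a Lipschitz bound on \(w^{\rm fin}\) in \(r\) on \([\zeta_N,3R_N]\). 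Near \(\lambda_N=0\) the denominators behave like \(r^2\ge\zeta_N^2\), so \(\partial_r w=O(\zeta_N^{-3}\mathrm{polylog})\). Combined with \(d_N/\zeta_N\) and the \(O(1)\) signed signal, this gives an error of order \(N^{-7/64}N^{3/32}\mathrm{polylog}\to0\), which is where the exponents in \eqref{eq:seq} are tuned. No rate for \(\lambda_N\to\lambda\) enters, because every step uses \(\lambda_N\) itself.

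For the risk transfer \eqref{eq:risktransfer}, I would split the window into three parts. Below the guard, returning the grid gives risk at most \(r^2\le\zeta_N^2\to0\), and the target is also \(\le\zeta_N^2\). Above a large fixed \(R\), both risks are small: \(s_{N,q}\asymp r^4\), so \(\psi\) decays exponentially, and the plug-in risk is controlled by \(r^2\Pr(\text{sign error})\) plus \(\E(\widehat r-r)^2\). On the compact middle range, the conditional CLT together with continuity of \(s\mapsto\E[(\tanh(sS+\sqrt sZ)-S)^2]=\psi(s)\) (shown to be the risk in Section IV) gives convergence of \(\widehat r^2\) times the loss. Here I use \(\widehat r\to r\), boundedness of \(\tanh\), and \(s_{N,q}\to\mu^2_{q,\lambda}\) uniformly, the latter by continuity of the finite profile \eqref{eq:sfinite} in \((\eps_N,\lambda_N)\). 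Finally, the bad pilot event contributes at most \(\sqrt N\Pr(\mathcal E_N^c)\cdot O(R_N^2)\to0\) to the normalized risk.
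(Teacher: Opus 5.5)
Your architecture matches the paper's: you condition on the independent pilot, use the exact finite weights, let the normal equation \eqref{eq:finiteorth} at $\widehat r$ annihilate the $N^{1/4}$ radial mean and the centre term, apply Berry--Esseen, and split the radii into three zones. However, the risk transfer breaks in your middle zone. That zone is $[\zeta_N,R]$ (nothing else in your split covers it), and there you assert $s_{N,q}(r,c_N)\to\mu^2_{q,\lambda}(r,[c_N])$ uniformly ``by continuity of the finite profile in $(\eps_N,\lambda_N)$.'' For $\lambda=0$ this is false. The profile is discontinuous at $(r,\lambda)=(0,0)$: its small-radius value is governed by the ratio $\lambda_N/r^2$, and the guard $r\ge\zeta_N$ does not control that ratio when $\lambda_N\to0$ slowly. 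The paper gives an explicit counterexample. Take $M=4$, $c$ equal to zero on the cycle edges and one on the diameters, $\lambda_N=1/\log N$, and $r_N=2\zeta_N$. Then $\mu^2_{0,0}(r_N,[c])\to4/5$ while $s_{N,0}(r_N,c)\to0$. Since the lemma is claimed with no rate for $\lambda_N\to\lambda$, this is exactly the case that must be handled. The missing idea is to transfer the risk rather than the strength near the origin. For fixed $\delta>0$, on $[0,\delta]$ both $R^{\rm fin}_{N,q,r}(c_N)$ and $r^2\psi(\mu^2_{q,\lambda})$ are at most $O(\delta^2)+o(1)$; this uses $|\tanh|\le1$, $\psi\le1$, and $\widehat r^2\le r^2+d_N$ on the good event. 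Strength convergence is then needed only on a fixed compact $[\delta,R]$, where every denominator stays away from its singular endpoint. Finally let $N\to\infty$, $\delta\downarrow0$, and $R\uparrow\infty$, in that order. Your below-guard bound $r^2\le\zeta_N^2$ is the right device, but it must be pushed up to a fixed $\delta$.

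Three smaller steps also need repair. First, uniformity of Berry--Esseen is governed by $\operatorname{Var}(T_N)\approx s_{N,q}$, not by the cell variances. This strength can be arbitrarily small inside the window, for example near $r=\zeta_N$ with a positive floor, or wherever the star of $c_N$ agrees with $-[u_q(r)]$ modulo constants. Lower bounds on individual cell variances do not prevent this, so you need a separate small-strength argument, as in the paper's threshold $\rho_N=e_N^{1/2}$ with Chebyshev below it. Second, your plug-in estimate does not close as written: $\partial_rw=O(\zeta_N^{-3})$ times $|\widehat r-r|\le d_N/\zeta_N$ gives $d_N\zeta_N^{-4}=N^{1/64}\to\infty$. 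You need three things:
\begin{itemize}
\item the sharper bound $|\partial_rw|\le C(1+r^{-2})$ of \eqref{supp:eq:wbound};
\item the fact that the cell signal is $O(r(1+r^2))$ rather than $O(1)$;
\item a bound on the variance perturbation $\sum_jA^{\rm fin}_{N,qj}(r)\{w_j(\widehat r)^2-w_j(r)^2\}$, which you omit and which is where $d_N\zeta_N^{-3}$ actually arises.
\end{itemize}
Third, in the large zone the term $r^2\Pr(\text{sign error})$, with $r$ up to $3R_N$, is not controlled by an unquantified $o_{\rm BL}(1)$. You need either an exponential (Bernstein) tail for $T_N$, or, as the paper does, a bounded-Lipschitz error of order $o(R_N^{-2})$ applied to the squared loss, whose Lipschitz constant is $O(R_N^2)$.
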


\emph{Proof sketch.} A centered one-frame summand in \eqref{eq:Tfinite} is
bounded by \(CN^{-1/4}(R_N+\zeta_N^{-1})\). Its accumulated third absolute
moment is at most
\begin{equation}
 CN^{-1/4}(R_N^5+\zeta_N^{-1})=o(1),\label{eq:BEbound}
\end{equation}
uniformly in \(q,c_N,r\); for a positive limiting floor the singular term is
absent. The exact variance is \(s_{N,q}+o(1)\), the signed mean is
\(Ss_{N,q}+o(1)\), \eqref{eq:finiteorth} cancels the plug-in radial and centre
terms, and \(d_N\zeta_N^{-3}\to0\). Berry--Esseen smoothing, including the
degenerate-variance endpoint, proves \eqref{eq:finiteCLT}; the chosen rates
make its bounded-Lipschitz error \(o(R_N^{-2})\). Supplementary Lemma~8 gives
an explicit small-variance threshold and observes that the squared-loss test
function has bounded-Lipschitz norm \(O(R_N^2)\), so this rate is sufficient
for the risk.

For the transfer, fix \(0<\delta<R<\infty\) and split
\([0,3R_N]=[0,\delta]\cup[\delta,R]\cup[R,3R_N]\). First,
\begin{equation}
 \sup_{r\le\delta}|r^2\psi(s_{N,q})-
 r^2\psi(\mu_{q,\lambda}^2)|\le2\delta^2.\label{eq:smallzone}
\end{equation}
Second, the denominators are bounded away from their singular endpoint on
\([\delta,R]\); compact continuity gives
\begin{equation}
 \sup_{\substack{q,c_N\in\mathcal K_B\\\delta\le r\le R}}
 |s_{N,q}(r,c_N)-\mu_{q,\lambda}^2(r,[c_N])|\to0,\label{eq:midzone}
\end{equation}
and uniform continuity of \(\psi\) transfers the compact risk. Third,
antisymmetric pairing of \(b_{qj}\) makes the two profile centres uniformly
bounded, and cubic dominance gives, for all sufficiently large fixed \(R\),
\begin{equation}
 s_{N,q}(r,c_N)\wedge\mu_{q,\lambda}^2(r,[c_N])\ge cr^4,
 \quad R\le r\le3R_N.\label{eq:largecoercive}
\end{equation}
The hard-sign rule yields \(\psi(s)\le Ce^{-cs}\), hence both large-zone
risks are bounded by \(Cr^2e^{-cr^4}\). Take first \(N\to\infty\), then
\(\delta\downarrow0\), and finally \(R\uparrow\infty\). This proves
\eqref{eq:risktransfer}. Uniform strength convergence fails at \(r=0\), so
risk transfer is established directly; the factor \(r^2\) suppresses the
nonuniform small-radius regime.

The choices in \eqref{eq:seq} satisfy
\[
 N^{-1/4}R_N^5\to0,\quad d_N\zeta_N^{-3}\to0,\quad
 \sqrt N(N+1)^Me^{-cR_N^4}\to0.
\]
They simultaneously control the growing-window Taylor remainder, pilot
error, singular zero-critical-floor weights, conditional triangular approximation,
and every splice error.

\begin{lemma}[Uniform outer Hellinger inverse]\label{lem:U5}
Let \(\Theta_N^{\rm out}\) be the circle after removing the open
\(R_N\eta_N\)-neighborhood of every routing node. Uniformly over bounded
near-router tangents, write \(P_{q,h,N}\) for the law at
\(\theta_q+h\). Then
\begin{align}
 H^2(P_{q,h_1,N},P_{q,h_2,N})&\ge c|h_1-h_2|^2
 &&\text{on one side},\label{eq:Hsameside}\\
 H^2(P_{q,h,N},P_{q,-h,N})&\ge ch^4
 &&R_N\eta_N\le|h|\le h_0.\label{eq:Hopposite}
\end{align}
On each fixed compact set away from the nodes there are \(c,\rho>0\) such
that
\begin{equation}
 H^2(P_{\theta,N},P_{\theta',N})\ge
 c\distT(\theta,\theta')^2
 \quad\text{if }\distT(\theta,\theta')\le\rho,\label{eq:Hcompactmetric}
\end{equation}
and pairs farther apart have a uniform positive Hellinger separation.
Consequently the minimum-Hellinger estimator on \(\Theta_N^{\rm out}\) has
uniform risk \(O(N^{-1})\).
\end{lemma}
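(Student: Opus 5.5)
The plan is to treat the three regimes in Lemma~\ref{lem:U5} separately: a one-sided local regime near each node, an opposite-sign local regime near each node, and a compact regime away from the nodes. Each bound comes from a second-moment lower bound on Hellinger distance, and the estimator claim then follows from a standard minimum-distance argument.

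\emph{Near-node bounds.} For near-node pairs we use the elementary bound
\[
 H^2(P,Q)\ge\tfrac14\sum_j\frac{(p_j-q_j)^2}{(\sqrt{p_j}+\sqrt{q_j})^2}
 \ge\tfrac18\sum_j\frac{(p_j-q_j)^2}{p_j+q_j},
\]
restricted to the dark cells \(j\ne q\). By \eqref{eq:abexp}, perturbed by \(U_N=e^{N^{-1/2}K_N}U_0\), each dark probability is
\[
 (1-\eps_N)\bigl|(v_{qj}h+N^{-1/2}K_{N,jq}+O(h^2))\bigr|^2+\eps_N/M .
\]
On the outer set \(|h|\ge R_N\eta_N\), both \(N^{-1/2}\) and \(\eps_N\) are \(o(h^2)\) uniformly over bounded tangents, because \(R_N\to\infty\). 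The dark cell therefore equals \(a_{qj}h^2(1+o(1))+b_{qj}h^3+O(h^4)\).

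For \eqref{eq:Hsameside}, take \(h_1,h_2\) of the same sign with \(|h_1|\le|h_2|\). The difference \(a_{qj}(h_2^2-h_1^2)\) is at least of order \(|h_2||h_2-h_1|\), while the denominator is of order \(h_2^2\). The quotient is then of order \((h_2-h_1)^2\), uniformly.

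For \eqref{eq:Hopposite}, the quadratic parts cancel in the difference between \(\pm h\), leaving \(2b_{qj}h^3+o(h^3)\). Here we need a uniform argument that the perturbation corrections are \(o(h^3)\). The cross term \(2\Re(v_{qj}^*K_{jq})hN^{-1/2}\) is odd in \(h\), so it does \emph{not} cancel. This is the main obstacle. Our first approach is to use the channel \(j=q+1\), where \(b\ne0\), together with a sum over all \(j\). The odd contribution is \(2h\bigl(b_{qj}h^2+2a_{qj}c_{qj}N^{-1/2}\bigr)\), exactly as in \eqref{eq:cell} with \(\kappa=0\). Since \(N^{-1/2}\le h^2/R_N^2\), we get \(c_{qj}N^{-1/2}=o(h^2)\) uniformly on bounded fields. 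Dividing by the denominator \(\asymp h^2\) gives a contribution \(\gtrsim h^4\). For \(|h|\) up to a small fixed \(h_0\), the Taylor remainders are controlled by choosing \(h_0\) small.

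\emph{Compact regime.} Away from the nodes the perturbation is \(O(N^{-1/2})\) in \(C^1\). The exact DFT law is globally injective, as shown via the moment \(m_1\). Its Fisher information is positive away from the nodes: on a compact set avoiding the nodes, at least two cells carry probability bounded below and \(\partial_\theta p\neq0\). Granting this nondegeneracy (whose verification is part of the remaining gap), compactness gives \eqref{eq:Hcompactmetric} for the unperturbed law. Continuity then preserves it with halved constants for all large \(N\). Pairs farther apart than \(\rho\) have \(H^2\) bounded below by injectivity plus compactness, again stable under the \(O(N^{-1/2})\) perturbation.

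\emph{Estimator risk.} Gluing the three regimes gives a uniform lower bound \(H^2(P_\theta,P_{\theta'})\ge c\,\min(\distT(\theta,\theta')^2,1)\) on \(\Theta_N^{\rm out}\), except for opposite-sign near-node pairs, where the bound is \(ch^4\ge cR_N^4/N\). We then apply the standard minimum-Hellinger risk bound over a finite \(N^{-1}\)-net, \(\Pr\{H^2(P_{\hat\theta},P_\theta)>t\}\le|{\rm net}|e^{-cNt}\). This yields \(\E\distT^2=O(N^{-1})\) on same-side and compact pairs. The sign-error event has probability at most \(\mathrm{poly}(N)e^{-cR_N^4}\), which by the stated rate \(\sqrt N(N+1)^Me^{-cR_N^4}\to0\) contributes \(o(N^{-1/2})\). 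Finally, that \(\mathrm{poly}(N)e^{-cR_N^4}\) is actually \(O(N^{-1})\) needs checking against the growth of \(R_N\).
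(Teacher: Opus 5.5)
Your near-node analysis follows the paper. You use the single cell $j=q+1$ with $b_{q,q+1}\ne0$, and on $\Theta_N^{\rm out}$ you have $N^{-1/2}\le h^2/R_N^2$ and $\eps_N=o(h^2)$. Hence the odd cross term $4a_{qj}c_{qj}N^{-1/2}h$ is $o(h^3)$, the replaced mass is $O(h^2)$, and \eqref{eq:Hopposite} follows. For \eqref{eq:Hsameside}, a value-level $(1+o(1))$ factor does not control differences of nearby values. You should argue through $p_N'(h)=2a_{qj}h+O(h^2+N^{-1/2})$, whose sign is fixed, but that is routine.

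The real gap is the compact regime. There you explicitly grant the nondegeneracy on which \eqref{eq:Hcompactmetric} rests, and having two cells with probability bounded below does not give $\partial_\theta p\ne0$. The missing idea is the first label moment. The function $m_1(\theta)=\sum_j e^{2\pi ij/M}p^{U_0}_\theta(j)$ is a fixed linear functional of the law, with $m_1'(\theta)=\frac{i(M-1)}{M}e^{i\theta}(1-e^{-iM\theta})$, which vanishes only on the grid. So off the grid some $p_j'\ne0$. Every exact-router probability is positive there, so $\partial_\theta\sqrt{p_j}=p_j'/(2\sqrt{p_j})\ne0$. Compactness then bounds the derivative of the square-root embedding away from zero. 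Taylor's theorem gives \eqref{eq:Hcompactmetric}, and $C^1$ convergence $(U_N,\eps_N)\to(U_0,0)$ transfers it to the near-router laws.

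The estimator step also does not deliver the claimed rate.

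\emph{Opposite-side pairs.} You proved the opposite-side bound only at equal radii, but the minimum-distance estimator can land at any $\theta_q-k$. Controlling that confusion event needs
$H^2(P_{q,h,N},P_{q,-k,N})\ge c\{(h-k)^2+\min(h,k)^4\}$. The paper gets this from the paired cells $j=q\pm1$, which have the same $a$ and opposite $b$. Their square-root differences are $\sqrt a\,(h-k)\pm\frac{b}{2\sqrt a}(h^2+k^2)$ up to small errors.

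\emph{The net bound.} A union bound $|\mathrm{net}|e^{-cNt}$ over an $N^{-1}$-net with $|\mathrm{net}|\asymp N$ integrates only to $O(\log N/N)$, not $O(N^{-1})$. The clean route is as follows.
\begin{itemize}
\item Since $\theta\in\Theta_N^{\rm out}$ is a competitor in \eqref{eq:outerest}, $H(P_{\widehat\theta},P_\theta)\le 2H(\widehat p,P_\theta)$.
\item $\E_pH^2(\widehat p,p)\le M/n$, because $(\sqrt{\widehat p_j}-\sqrt{p_j})^2\le(\widehat p_j-p_j)^2/p_j$. This gives $O(N^{-1})$ metric risk.
\item The types bound \eqref{eq:typebound} is then needed only for the tails: side confusion and far pairs.
\end{itemize}

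\emph{The check you left open.} It is immediate: $R_N^4=\{\log(N+e)\}^{4/3}\gg\log N$, so $(N+1)^Me^{-cR_N^4}$ is smaller than every power of $N$.
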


\emph{Proof sketch.} For \eqref{eq:Hopposite}, use the fixed non-diameter
cell \(j(q)=q+1\): its odd probability part is
\(2b_{qj}h^3+4a_{qj}c_{qj}N^{-1/2}h+O(h^5+N^{-1/2}h^3+N^{-1}h)\), and
\(b_{q,q+1}\ne0\). The selected square-root coordinate has a derivative
bounded away from zero on either side, which gives \eqref{eq:Hsameside}.
On compact sets separated from the routing nodes, the exact-router first moment
satisfies
\begin{equation}
 m_1'(\theta)=\frac{i(M-1)}M e^{i\theta}(1-e^{-iM\theta}).\label{eq:m1prime}
\end{equation}
It never vanishes away from the routing nodes. All exact-router probabilities
are then positive, so the derivative of the square-root probability embedding
is nonzero. Compactness gives a uniform positive minimum, Taylor's theorem
gives \eqref{eq:Hcompactmetric}, and exact-router injectivity gives the
nonlocal separation. The \(C^1\) convergence
\((U_N,\eps_N)\to(U_0,0)\) transfers both constants to the near-router
family. Finally, for the empirical law \(\widehat p\), the method of types
\cite{CoverThomas2006} gives
\begin{equation}
 \Pr_p\{H^2(\widehat p,p)\ge u\}\le(N+1)^Me^{-Nu}\label{eq:typebound}
\end{equation}
This bound
integrates with \eqref{eq:Hsameside}--\eqref{eq:Hcompactmetric} to
\(O(N^{-1})\) metric risk; \eqref{eq:Hopposite} controls side confusion by
\((N+1)^Me^{-cNh^4}\).

The local rule uses \(\widehat q,\widehat r,T_N\); the outer rule is
\begin{equation}
 \widehat\theta_{\rm out}\in\argmin_{\phi\in\Theta_N^{\rm out}}
 H(\widehat p,\widetilde p_{\phi,N}^{U_N}),\label{eq:outerest}
\end{equation}
with smallest-angle tie breaking. Invoke the local rule when the pilot finds
a routed node and \(\widehat r\le2R_N\), and invoke \eqref{eq:outerest}
otherwise. Compact argmin measurability and deterministic tie rules make this
one measurable estimator depending only on \(N,M,U_N,\eps_N\), never on the
true node, radius, or limiting regime. The complement of \(\mathcal E_N\)
is negligible at the \(N^{1/2}\)-risk scale. Lemma~\ref{lem:U4} controls the
local band, Lemma~\ref{lem:U5} the outer band, and either rule has
\(o(N^{-1/2})\) risk on the transition band. Hence
\begin{equation}
 \limsup_N\sqrt N\,\mathcal R_N^*(\tau,\eps_N)
 \le\cM(\tau,\lambda).\label{eq:mainlimsup}
\end{equation}
The converse and achievability bounds complete the proof of
Theorem~\ref{thm:G1} without requiring a convergence rate for
\(\lambda_N-\lambda\).

\section{Consequences for Design Rigidity and Floor Dependence}

\subsection{The global three/four boundary}

\begin{theorem}[Objective-specific global rigidity]\label{thm:M3global}
For every finite \(\lambda\ge0\) and \(\tau\ge0\),
\begin{equation}
 \cM(\tau,\lambda)=\cM(0,\lambda),\qquad M=3,\label{eq:M3rigid}
\end{equation}
and \([c]=0\) is the unique global minimizer in the canonical quotient.  In
contrast, for every fixed \(M\ge4\), finite \(\lambda\ge0\), and \(\tau>0\),
\begin{equation}
 \cM(\tau,\lambda)<\cM(0,\lambda).\label{eq:M4improve}
\end{equation}
``Global'' in \eqref{eq:M3rigid} refers to the complete critical quotient of
Theorem~\ref{thm:G1}, not to arbitrary finite-distance unitaries.
\end{theorem}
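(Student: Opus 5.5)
The plan is to work entirely in the canonical quotient of \eqref{eq:C} and use \eqref{eq:qdist}. Under cyclic relabelling \(q\mapsto q+1\) the coefficients \(a_{qj},b_{qj}\) depend only on \(j-q\), so at \([c]=0\) the profile \(\mu_{q,\lambda}^2(r,0)=:\mu_0^2(r)\) is the same for every node. Hence \(F_{M,\lambda}(0)=\sup_{r\ge0}r^2\psi(\mu_0^2(r))\). The remark after \eqref{eq:C} (the risk vanishes as \(r\downarrow0\) and \(r\to\infty\)) shows this supremum is attained on a compact set of radii \(r_0>0\). Both parts reduce to comparing \(\mu_{q,\lambda}^2(r,[c])\) with \(\mu_0^2(r)\). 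Since \(\psi\) is strictly decreasing, a larger sign strength gives a smaller risk.

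\emph{The case \(M=3\).} Every pair has separation \(d=1\), so \(a_{qj}=1/3\) and \(w_{qj}(r,\lambda)=:w(r)\) are common to both edges at a node. Also \(b_{q,q+1}=-b_{q,q-1}=:b\ne0\). A two-edge star with equal weights has quotient distance \(\tfrac{w}{2}(\text{difference})^2\). Writing \(D_q=c_{q,q+1}-c_{q,q-1}\) and \(\delta=(u_{q,q+1}-u_{q,q-1})/r^2=b/a\ne0\), formula \eqref{eq:qdist} gives
\[
\mu_{q,\lambda}^2(r,[c])=2r^2w(r)\,(D_q+\delta r^2)^2,\qquad \mu_0^2(r)=2r^2w(r)\,\delta^2r^4 .
\]
Each edge enters two of the \(D_q\) with opposite signs, so \(\sum_qD_q=0\). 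Moreover \(D\equiv0\) forces all three edges to be equal, i.e.\ \([c]=0\). So if \([c]\ne0\), some node \(q\) has \(D_q\delta<0\). Fix a maximizing radius \(r_0\) of \(F(0)\) and split into two cases.
\begin{itemize}
\item If \(r_0^2>|D_q|/(2|\delta|)\), then \((D_q+\delta r_0^2)^2<(\delta r_0^2)^2\). Strict monotonicity of \(\psi\) gives \(F([c])\ge R_{q,r_0,\lambda}([c])>F(0)\).
\item Otherwise, take \(r_1^2=|D_q|/|\delta|\ge2r_0^2\), at which \(\mu=0\). Then \(F([c])\ge r_1^2\psi(0)=r_1^2>r_0^2\ge r_0^2\psi(\mu_0^2(r_0))=F(0)\).
\end{itemize}
This gives \(F([c])>F(0)\) for every nonzero class, with no use of the resource bound. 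Hence \eqref{eq:M3rigid} holds and \(0\) is the unique minimizer. At \(\lambda=0\) the same inequalities hold, with \(w\) interpreted through the radial convention of \eqref{eq:mu}.

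\emph{The case \(M\ge4\).} I will move along a persistent direction \(g\) with every star nonconstant. Take the translation-invariant field \(g_{q,q+d}=h_d\) for \(1\le d\le\lfloor M/2\rfloor\), with \(h_1\ne h_2\); this is possible because \(\lfloor M/2\rfloor\ge2\) when \(M\ge4\). This field satisfies \eqref{eq:terminal}, so it lies in \(\NM(\lambda)\) for \(\lambda>0\). It also lies in \(\NM(0)\), because each summand \(y_d(q)-y_d(q-d)\) in \eqref{eq:distancebalance} vanishes identically, and this argument is independent of the pole structure. By \eqref{eq:starid} and its \(\lambda=0\) analogue, \(\mu_{q,\lambda}^2(r,tg)=\mu_0^2(r)+4t^2r^2V_{q,r,\lambda}(g)\). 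Each star of \(g\) takes at least two distinct values on positive weights, so \(V_{q,r,\lambda}(g)>0\) for every \(q\) and every \(r>0\).

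Thus every \(R_{q,r,\lambda}(tg)<R_{q,r,\lambda}(0)\) for \(r>0\), and \(R\) is unchanged at \(r=0\). To get strictness of the supremum, I will use the effective compactness of the radius range uniformly on a small ball in \(t\). The near-maximizing radii lie in a fixed compact \([r_-,r_+]\subset(0,\infty)\). There a continuous positive function is strictly dominated, so \(F(tg)<F(0)\). Finally, \(\|[tg]\|_{\rm rt}^2=t^2\|[g]\|_{\rm rt}^2\le\tau\) for small \(t\), which yields \eqref{eq:M4improve}.

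The main obstacle is this last strict-supremum step. Two points need care: uniformity of the localization of near-maximizers as \(t\to0\), and at \(\lambda=0\) the singular weights near \(r=0\). I will handle both with the bound \(R\le r^2\) near \(0\) and the \(r^4\) coercivity of \(\mu^2\) at large \(r\) from the discussion after \eqref{eq:C}.
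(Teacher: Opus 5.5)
Your proposal is correct. For \(M\ge4\) it is essentially the paper's argument. The paper uses the special case \(h_1=1\), \(h_d=0\) for \(d\ge2\) (the nearest-neighbor ring), together with the exact curvature identity and the homogeneity \(\|t[g]\|_{\rm rt}^2=t^2\|[g]\|_{\rm rt}^2\). The strictness step you flag as the main obstacle needs no uniformity in \(t\). For each fixed \(t\ne0\), \(r\mapsto R_{q,r,\lambda}(tg)\) is continuous and vanishes as \(r\downarrow0\) and as \(r\to\infty\). Its supremum is therefore attained at some \(r^*>0\), where your pointwise strict inequality already gives \(F(tg)<F(0)\). That is exactly the paper's one-line conclusion.

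For \(M=3\) your route genuinely differs from the paper's. The paper proves that the envelope \(G_\lambda(\delta)=\sup_x x\psi\{s_\lambda(x,\delta)\}\) is strictly decreasing on all of \(\R\). To do so it shows that every maximizer satisfies \(\delta+x/\sqrt3>0\), that maximizers stay in a compact set as \(\delta\) varies locally, and that \(\partial_\delta\) is negative there; it then concludes via upper Dini derivatives and a Dini mean-value lemma.

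You prove only the one-sided comparison \(G_\lambda(\delta)>G_\lambda(0)\) for \(\delta<0\), which is all the theorem needs. You compare at a baseline maximizer \(r_0>0\). Either \(r_0^2\) exceeds half the zero-strength radius \(r_1^2\), in which case the strength at \(r_0\) is strictly smaller. Or the zero-strength radius carries risk \(r_1^2\psi(0)=r_1^2\ge2r_0^2>F(0)\). This needs only attainment of the baseline supremum at a positive radius, strict monotonicity of \(\psi\), and \(\psi(0)=1\ge\psi\). It avoids maximizer localization and Dini derivatives entirely, and it is more elementary.

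What the paper's heavier argument buys is a global ordering. Strict monotonicity of \(G_\lambda\) gives \(F_{3,\lambda}([c])=G_\lambda(\min_q\delta_q)\), so the three-channel value is strictly ordered by the most negative imbalance, not merely uniquely minimized at the zero class.
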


\begin{proof}
For \(M=3\), put
\(\delta_q=c_{q,q+1}-c_{q,q-1}\).  Direct profiling of the two equally
weighted dark cells gives, for \(x=r^2>0\),
\begin{equation}
 \begin{aligned}
 s_\lambda(x,\delta)&=\frac{2x}{3(x+\lambda)}
 \left(\delta+\frac{x}{\sqrt3}\right)^2,\\
 G_\lambda(\delta)&=\sup_{x\ge0}x\psi\{s_\lambda(x,\delta)\}.
 \end{aligned}\label{eq:M3profile}
\end{equation}
where the risk at \(x=0\) is zero.  Hence
\(F_{3,\lambda}([c])=\max_qG_\lambda(\delta_q)\).

The required strict envelope ordering follows as follows. The supremum in
\eqref{eq:M3profile} is positive and attained in \((0,\infty)\); its
maximizers remain in one compact interval when \(\delta\) ranges locally.
Every maximizer satisfies \(\delta+x/\sqrt3>0\).  Indeed, if \(\delta<0\)
and \(x_0=-\sqrt3\delta\), then the objective is below \(x_0\) for
\(x<x_0\), equals \(x_0\) at \(x_0\), and equals
\(x_0+\epsilon+O(\epsilon^2)>x_0\) immediately to its right.  At every
maximizer,
\begin{equation}
 \partial_\delta\{x\psi(s_\lambda)\}
 =\frac{4x^2}{3(x+\lambda)}
 \left(\delta+\frac{x}{\sqrt3}\right)\psi'(s_\lambda)<0.\label{eq:M3deriv}
\end{equation}
A maximizing-sequence argument on the local compact set therefore makes the
upper right Dini derivative of \(G_\lambda\) strictly negative.  The Dini
mean-value lemma proves that \(G_\lambda\) is strictly decreasing.  Finally,
\(\sum_q\delta_q=0\), and all \(\delta_q=0\) exactly for a constant edge
field.  Every nonzero quotient has some \(\delta_q<0\), whence
\(F_{3,\lambda}([c])\ge G_\lambda(\delta_q)>G_\lambda(0)\).  This proves
\eqref{eq:M3rigid} and quotient uniqueness.

For \(M\ge4\), let \(g\) equal one on the nearest-neighbor cycle and zero on
all other edges.  The \(+1\) and \(-1\) terms in \eqref{eq:D} cancel at every
node and radius, for both zero and positive floors, so \(g\) is persistent.
Each star contains two ones and at least one zero.  For positive floors,
\(V_{q,r,\lambda}(g)>0\) and \eqref{eq:starid} strictly decreases every
local risk for nonzero \(t\).  At zero floor the identical curvature formula
follows directly from \eqref{eq:qdist} for \(r>0\), where all star weights
remain positive.  The redesigned maximum is
attained at a positive radius, so the pointwise inequality is strict also
after maximization.  Since
\(\|t[g]\|_{\rm rt}^2=t^2\|[g]\|_{\rm rt}^2\), every \(\tau>0\) admits such
a feasible \(t\), proving \eqref{eq:M4improve}.

\end{proof}

\subsection{Arithmetic localization of curvature visibility}

For a nonzero persistent class, call node \(q\) \emph{curvature-visible} when
its profiled star curvature is positive, and define
\begin{equation}
 S([g])=\{q:V_{q,r,\lambda}(g)>0\}.\label{eq:visibility}
\end{equation}
The constant-star equality condition shows that this set is gauge invariant
and, within either floor regime, independent of \(r>0\).

\begin{proposition}[Smallest-prime curvature-visibility law]
\label{prop:localization}
Let \(M\ge4\), \(\lambda>0\), and let \(p_{\min}(M)\) be the smallest prime
dividing \(M\).  Then
\begin{equation}
 \min_{0\ne[g]\in\NM(\lambda)}|S([g])|=p_{\min}(M).\label{eq:pmin}
\end{equation}
For odd \(M\ge5\), every nonzero positive-floor persistent mode is visible at
all nodes iff \(M\) is prime.  At zero floor the corresponding minimum is two
for even \(M\ge4\), and three for odd \(M\ge5\); \(M=3\) is excluded because
\(\NM(0)=\{0\}\).
\end{proposition}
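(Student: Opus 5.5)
The plan is to work entirely in the explicit terminal description of Theorem~\ref{thm:G2}(ii) and to use the equality clause of Theorem~\ref{thm:G2}(iii): since all weights in \eqref{eq:star} are positive, \(q\in S([g])\) iff the star \(\{g_{qj}\}_{j\ne q}\) is nonconstant. First I will record the star structure of a persistent field. Write \(y_d(q)=g_{\{q,q+d\}}\) for \(1\le d\le L\), and let \(z(q)\) be the diameter value when \(2\mid M\). Then \eqref{eq:terminal} says that \(y_d(q)=y_d(q-d)\), so \(y_d\) is constant on the cosets of the subgroup \(\langle d\rangle\le\mathbb Z_M\). Hence the star at \(q\) consists of the values \(y_d(q)\), each taken on both edges \(\{q,q\pm d\}\), together with \(z(q)\). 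Note that \(y_1\) is a global constant \(c_1\). Node \(q\) is invisible iff \(y_d(q)=c_1\) for all \(d\) and \(z(q)=c_1\).

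\emph{Lower bound.} Suppose \([g]\ne0\) and \(|S([g])|<p:=p_{\min}(M)\), and let \(I\) be the set of invisible nodes. For \(1\le d\le L\), each coset of \(\langle d\rangle\) has \(M/\gcd(M,d)\) elements. This number is a divisor of \(M\) exceeding \(1\), so it is at least \(p\). Therefore every coset meets \(I\), which forces \(y_d\equiv c_1\). When \(M\) is even we have \(p=2\), so at most one node is visible and every diameter edge has an invisible endpoint; hence \(z\equiv c_1\). Then \(g\) is constant, contradicting \([g]\ne0\). (That \(S\ne\emptyset\) for a nonzero class is the last sentence of Theorem~\ref{thm:G2}(iii).)

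\emph{Attainment.} For odd \(M\), take \(d_0=M/p\le M/3<M/2\). Set \(y_{d_0}\) equal to the indicator of the coset \(\langle d_0\rangle\), which has exactly \(p\) elements, and set all other edges to zero. This field satisfies \eqref{eq:terminal}, so it is persistent, and it is nonconstant. Its star at a coset node contains the value \(1\) on two edges and \(0\) on the remaining \(M-3\ge1\) edges, so the node is visible; stars off the coset are zero. Thus \(|S|=p\). For even \(M\), put \(1\) on a single diameter edge. This is persistent because \(b=0\) there, and exactly its two endpoints are visible, since \(M-2\ge2\) other star edges vanish. The prime dichotomy for odd \(M\) follows at once: \(|S|\ge p=M\) forces full visibility when \(M\) is prime, while \(p<M\) exhibits a partially visible mode otherwise.

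\emph{Zero floor.} Here \(\NM(0)\) is the kernel of the single weighted divergence \(g\mapsto\sum_{j\ne q}b_{qj}g_{qj}\), and visibility again means a nonconstant star, by \eqref{eq:qdist} with positive weights for \(r>0\). If only one node \(q\) were visible, the remaining \(M-1\ge3\) nodes would have constant stars. They are joined by edges, so they share a common constant \(\gamma\), and then the star at \(q\) also equals \(\gamma\); this is impossible. Hence the minimum is at least \(2\) always, and it equals \(2\) for even \(M\) via the diameter edge. For odd \(M\) with exactly two visible nodes \(q,q'\), every edge except \(\{q,q'\}\) equals \(\gamma\). Subtracting \(\gamma\one\), which lies in the kernel because \(\mathcal D\one=0\), leaves \(b_{qq'}\delta=0\) with \(b_{qq'}\ne0\) since \(M\) is odd. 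So \(\delta=0\) and the class is null. For attainment at three nodes, I will support \(g\) on a triangle \(q_1q_2q_3\). The three divergence equations have rank at most two, because they sum to zero, so there is a nonzero solution in three unknowns. A zero edge value would force, via the nonzero \(b\)'s, the adjacent value to vanish too, so all three triangle values are nonzero. The \(M-3\ge2\) other star edges at each vertex are zero, so exactly the three vertices are visible. I expect the main care to lie in the star bookkeeping: each persistent value appears on paired edges, and diameters must be treated separately.
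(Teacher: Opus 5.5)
Your proof is correct and follows essentially the paper's route. Visibility is star nonconstancy, persistent values $y_d$ propagate along $\langle d\rangle$-cosets of size $M/\gcd(M,d)\ge p_{\min}(M)$, and attainment comes from one diameter or one coset indicator; at zero floor you use divergence-freedom with a triangle circulation. Your variations are cosmetic: you pin the gauge with the global constant $y_1\equiv c_1$ and phrase the lower bound as every coset meeting the invisible set, and for odd $M$ at zero floor you kill the lone surviving active edge directly, where the paper uses the no-degree-one-vertex/cycle argument; the omitted case $S=\emptyset$ there follows from the same shared-edge argument you give for $|S|=1$.
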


\begin{proof}
If \(S([g])\) is all nodes, the lower bound in \eqref{eq:pmin} is immediate.
Otherwise choose an invisible node and gauge its constant star to zero.  Any
other invisible star also has value zero because it shares an edge with the
first.  Thus every nonzero edge lies wholly inside \(S([g])\).  By
\eqref{eq:terminal}, a nonzero distance-\(d\) edge propagates around an orbit
of \(M/\gcd(M,d)\) nodes, at least \(p_{\min}(M)\); an even diameter has two
endpoints and corresponds to \(p_{\min}=2\).  This proves the lower bound.
One diameter edge attains it for even \(M\).  For odd \(M\ge5\), put
\(d=M/p_{\min}(M)\) and set one full \(d\)-orbit to one and all other edges
to zero.  Its visible support is exactly that orbit, proving attainment and
the prime corollary.

At zero floor, orient active edges and set \(h_{qj}=b_{qj}g_{qj}\).  Then
persistence is ordinary divergence freedom.  After the same invisible-star
gauge, a nonzero active support has no degree-one vertex and therefore
contains at least three vertices; a triangle circulation attains three for
odd \(M\ge5\).  For even \(M\), one zero-weight diameter edge attains two.
\end{proof}

\subsection{Geometry/value decoupling at the zero critical replacement floor}

Choose any Euclidean realization of \(\EM\), and let \(P_0\) and \(P_+\) be
the orthogonal projectors onto \(\NM(0)\) and the common positive-floor space,
respectively.

\begin{proposition}[Discontinuous kernels, continuous value]\label{prop:decouple}
The spaces satisfy \(\NM(\lambda)\subseteq\NM(0)\) for every \(\lambda>0\),
and
\begin{equation}
 \|P_0-P_+\|_{\rm op}=
 \begin{cases}0,&M=3,4,\\1,&M\ge5.\end{cases}\label{eq:projectorjump}
\end{equation}
Nevertheless \((\tau,\lambda)\mapsto\cM(\tau,\lambda)\) is jointly
continuous at every finite point of \([0,\infty)^2\), nonincreasing in
\(\tau\), and nondecreasing in \(\lambda\).  No strict monotonicity in
\(\lambda\) is asserted.
\end{proposition}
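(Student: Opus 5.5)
The plan is to treat the three assertions separately: the inclusion of kernels, the projector jump, and the regularity of \(\cM\). Only the joint continuity needs genuine analysis.

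\emph{Inclusion.} By \eqref{eq:terminal}, every \(g\in\NM(\lambda)\) satisfies \(g_{q,q+d}=g_{q,q-d}\) for \(1\le d<M/2\). At \(\lambda=0\), \eqref{eq:D} reduces to \([\mathcal D_{M,r,0}g]_q=4r^2\sum_{j\ne q}b_{qj}g_{qj}\). The terms at separations \(\pm d\) carry opposite coefficients \(b_{q,q+d}=-b_{q,q-d}\), and the diameter has \(b=0\). Hence the terminal recurrence annihilates \(\mathcal D_{M,r,0}\) for every \(r\), which gives \(\NM(\lambda)\subseteq\NM(0)\).

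\emph{Projector jump.} With nested ranges, \(P_0-P_+\) is itself the orthogonal projector onto \(\NM(0)\ominus\NM(\lambda)\). Its operator norm is therefore \(0\) or \(1\) according to whether the inclusion is an equality.
\begin{itemize}
\item For \(M=3\), both spaces are \(\{0\}\) by \eqref{eq:zeroDim} and \eqref{eq:gcd}.
\item For \(M=4\), \(L=1\), so \eqref{eq:gcd} gives \(\gcd(4,1)+2-1=2=M(M-3)/2\). Equal dimensions together with the inclusion force equality.
\item For \(M\ge5\), I exhibit an element of \(\NM(0)\setminus\NM(\lambda)\). Take the triangle circulation on nodes \(0,1,2\): put \(h=b\,g\) equal to a unit divergence-free circulation on the edges \(\{0,1\},\{1,2\},\{0,2\}\), and set \(g_{qj}=h_{qj}/b_{qj}\). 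These \(b_{qj}\) are nonzero because none of these edges is a diameter when \(M\ge5\).
\end{itemize}
The resulting \(g\) lies in \(\NM(0)\), as in the proof of Proposition~\ref{prop:localization}. It fails \eqref{eq:terminal}, because for instance \(g_{1,2}\ne0=g_{1,0-?}\) type mismatches appear at distance \(1\): \(g_{2,3}=0\) while \(g_{2,1}\ne0\). Since the support is only three nodes and \(M\ge5\), \(g\) is also not quotient-equivalent to a terminal field. Hence the projector difference has norm \(1\).

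\emph{Monotonicity.} Monotonicity in \(\tau\) is immediate because the resource balls are nested. For \(\lambda\), fix \(\kappa\) in \eqref{eq:mu}; every summand is nonincreasing in \(\lambda\) since its denominator grows, and a minimum over \(\kappa\) preserves this. Thus \(\mu^2_{q,\lambda}\) is nonincreasing in \(\lambda\). Because \(\psi\) is decreasing, each \(R_{q,r,\lambda}\) is nondecreasing in \(\lambda\), and taking sup/max followed by min preserves the ordering.

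\emph{Joint continuity.} The norm \eqref{eq:rtnorm} is positively homogeneous, so \(\{\|[c]\|_{\rm rt}^2\le\tau\}=\sqrt\tau\,\mathcal B_1\) for a fixed compact unit ball \(\mathcal B_1\). This gives
\[
\cM(\tau,\lambda)=\min_{[c]\in\mathcal B_1}F_{M,\lambda}(\sqrt\tau[c]).
\]
It therefore suffices to prove that \(([c],\lambda)\mapsto F_{M,\lambda}([c])\) is jointly continuous on compact sets of \([c]\) and bounded \(\lambda\). Given that, the minimum of a jointly continuous function over a fixed compact set is continuous in the parameters \((\sqrt\tau,\lambda)\).

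For joint continuity of \(F\), split the radius range exactly as in the proof of Lemma~\ref{lem:U4}.
\begin{itemize}
\item On \(r\le\delta\), both risks lie in \([0,\delta^2]\), because \(0\le\psi\le1\).
\item On \(r\ge R\), the cubic term gives \(\mu^2\ge cr^4\) uniformly over the compact \(([c],\lambda)\)-set, so the risk is at most \(Cr^2e^{-cr^4}\).
\item On \([\delta,R]\), the denominators \(a_{qj}r^2+\lambda/M\ge a_{qj}\delta^2\) are bounded away from zero. The profiled minimum over \(\kappa\) is attained on a uniformly bounded set, so \(\mu^2\) is jointly uniformly continuous, and so is \(\psi\circ\mu^2\).
\end{itemize}
Letting \(\delta\downarrow0\) and \(R\uparrow\infty\) then yields uniform continuity of the supremum.

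The main obstacle I expect is the point \(\lambda=0\), \(r\to0\), where \(\mu^2\) itself is discontinuous. The \(r^2\) prefactor in the small zone is what neutralizes this, so that argument must be carried out without ever evaluating \(\mu^2\) at \(r=0\).
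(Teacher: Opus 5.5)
Your proposal is correct and follows the paper's proof: the terminal recurrence \eqref{eq:terminal} kills the zero-floor divergence, nestedness makes $P_0-P_+$ a projector, monotonicity comes from feasible-set nesting and denominator growth, and joint continuity comes from the same three-zone radius split combined with the $\sqrt\tau$ rescaling of the resource ball. The only deviation is that for $M\ge5$ you certify $\NM(\lambda)\subsetneq\NM(0)$ with an explicit triangle circulation that violates \eqref{eq:terminal} at $q=2$, $d=1$, instead of comparing \eqref{eq:gcd} with \eqref{eq:zeroDim}; this is valid, is automatically gauge invariant because the recurrence is unchanged by adding constants, and avoids having to check the dimension inequality for all $M\ge5$. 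Just drop the garbled clause about $q=1$, where this field in fact satisfies the recurrence, $g_{\{0,1\}}=g_{\{1,2\}}$.
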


\begin{proof}
Condition \eqref{eq:terminal} kills the zero-critical-floor divergence, giving the
inclusion.  Equations~\eqref{eq:gcd} and \eqref{eq:zeroDim} show equality for
\(M=3,4\) and proper inclusion for every \(M\ge5\).  For proper nested
finite-dimensional subspaces, \(P_0-P_+\) is the identity on the nonzero
space \(\NM(0)\cap\NM(\lambda)^\perp\), proving \eqref{eq:projectorjump}.

For value continuity, restrict designs to any bounded routing ball and
\(\lambda\) to a compact interval.  Uniformly there, split radii into
\([0,\delta]\), \([\delta,R]\), and \([R,\infty)\).  The first-zone risk is
at most \(\delta^2\).  On the middle zone, all denominators in
\eqref{eq:mu} stay away from their sole singular endpoint and compact
continuity applies.  In the last zone, paired antisymmetry of \(b\) bounds the
profile centre and cubic dominance gives
\(\mu^2\ge cr^4\), hence
\(R_{q,r,\lambda}\le Cr^2e^{-cr^4}\).  Taking first parameter limits, then
\(\delta\downarrow0\), and finally \(R\uparrow\infty\) proves uniform
continuity of \(F_{M,\lambda}\) on bounded design balls, including at
\(\lambda=0\).  Reparameterizing the compact feasible ball by
\([c]=\sqrt\tau[d]\), \(\|[d]\|_{\rm rt}\le1\), proves joint continuity of
its minimum.  Feasible-set inclusion gives monotonicity in \(\tau\); increasing
\(\lambda\) decreases every \(\mu^2\) in \eqref{eq:mu}, and strict decrease of
\(\psi\) gives the stated weak information ordering after minimization,
consistent with Blackwell degradation \cite{Blackwell1953}.
\end{proof}

The endpoint is governed by the boundary-layer ratio \(\lambda/r^2\). For
fixed \([c]\) and \(\lambda=\rho r^2\), direct profiling in \eqref{eq:mu}
gives
\[
 \mu_{q,\rho r^2}^2(r,[c])\longrightarrow
 4\min_{\kappa\in\R}\sum_{j\ne q}
 \frac{a_{qj}^2(c_{qj}+\kappa)^2}{a_{qj}+\rho/M},
 \qquad r\downarrow0.
\]
The limit generally depends on \(\rho\), whereas
\(0\le R_{q,r,\lambda}\le r^2\). This path dependence explains how the raw
profiled strength and the compatibility kernel can be singular while the
risk and optimized value remain continuous.

\paragraph{Closed-form reference value.}
Define
\(K_M=(M^2-1)(M^2-4)/180\) and
\(\Gamma_\psi=\sup_{s\ge0}\sqrt{s}\,\psi(s)\).  The trigonometric identity
\(\sum_{j\ne q}b_{qj}^2/a_{qj}=K_M\) gives the exact no-resource,
zero-critical-floor value
\begin{equation}
 \cM(0,0)=\sqrt{\frac{180}{(M^2-1)(M^2-4)}}\,\Gamma_\psi.\label{eq:benchmark}
\end{equation}
Theorem~\ref{thm:M3global} makes the same value valid for every finite
three-channel resource and provides a closed-form reference for numerical
evaluation of the canonical minimax functional.

\section{Scope and Limitations}

The theorem does not cover growing \(M\), unknown or nonuniform replacement,
adaptive or random transforms, sparse coupling graphs, nonunitary channelizers,
multi-tone signals, calibration error, dependent frames, or full-vector or
power observations. The categorical interface should not be mistaken for a
conventional receiver without an explicit randomized reporting mechanism.
The attaining estimator is a statistical existence and attainment construction;
its computational complexity is not analyzed, and no claim of algorithmic
efficiency is made.

\section{Conclusion}

Near-perfect DFT routing imposes an exact statistical price in the labeled
index-only observation model. The sharp minimax theorem identifies that
price, proves attainment over the complete critical shrinking-defect class,
and supplies one measurable estimator valid through the singular zero
critical replacement floor. The compatibility theorem resolves the
first-order-neutral design geometry through Fourier-character Cauchy blocks:
\(\lfloor(M-1)/2\rfloor\) distinct normalized offset evaluations are both
necessary and sufficient for universal persistence certification, and the
terminal quotient has an explicit arithmetic dimension and positive-definite
aggregate curvature. These results establish a sharp channel-count boundary:
the zero tangent uniquely minimizes the canonical critical functional for
\(M=3\), whereas every positive routing-defect budget admits a strictly
improving design for \(M\ge4\). They further identify a smallest-prime
curvature-visibility scale and, for \(M\ge5\), a maximal compatibility-kernel
jump at \(\lambda=0\) without a jump in minimax value. The conclusions rely
essentially on fixed \(M\), known uniform replacement, labeled categorical
reports, and one deterministic shared unitary; alternative observation or
adaptation models require separate analysis.

\section*{Supplementary Material}
\setcounter{equation}{55}
\setcounter{theorem}{7}
\setcounter{section}{0}
\renewcommand{\thesection}{S\arabic{section}}
\renewcommand{\thesubsection}{\thesection-\Alph{subsection}}
\section{Complete finite experiment and notation}

Fix an integer \(M\ge3\). Let \(\T=\R/(2\pi\mathbb Z)\),
\(\theta_q=2\pi q/M\), and
\[
 x_\theta[n]=M^{-1/2}e^{in\theta},\qquad 0\le n<M.
\]
For a deterministic \(U\in U(M)\), one observation has probabilities
\(p_\theta^U(j)=|(Ux_\theta)_j|^2\). Known independent uniform replacement
at level \(\eps_N\) gives
\[
 \widetilde p_{\theta,N}^U=(1-\eps_N)p_\theta^U+\eps_N\one/M.
\]
There are \(N\) labeled IID observations, and loss is squared circular
distance. We assume throughout
\[
 \lambda_N=\sqrt N\eps_N\longrightarrow\lambda\in[0,\infty),
 \qquad \delta_{\rm ch}(U_N)\le\tau/N,
\]
for fixed finite \(\tau\). No rate for \(\lambda_N-\lambda\) is assumed.

The reference DFT channelizer is \(U_0=X^\dagger\), where
\(X=[x_{\theta_0},\ldots,x_{\theta_{M-1}}]\). Write
\(\alpha_{qj}=\pi(q-j)/M\) and, for \(j\ne q\),
\begin{equation}
 a_{qj}=\frac1{4\sin^2\alpha_{qj}},\qquad
 b_{qj}=-\frac{\cos\alpha_{qj}}{4\sin^3\alpha_{qj}}.
\label{supp:eq:ab}
\end{equation}
Then \(a_{qj}=a_{jq}>0\), \(b_{qj}=-b_{jq}\), and an even-\(M\) diameter
edge has \(b=0\). Every orientation argument below uses
\(j=q+1\), whose cubic coefficient is nonzero for all \(M\ge3\).

\subsection{Boundary facts used below}

The finite geometric sum gives the exact Dirichlet probabilities and, by
Taylor expansion, \(p_{qj}(h)=a_{qj}h^2+b_{qj}h^3+O(h^4)\) for \(j\ne q\)
and \(p_{qq}(h)=1-(M^2-1)h^2/12+O(h^4)\).  It also gives the exact first
label moment
\[
 \sum_j e^{2\pi ij/M}p_\theta^{U_0}(j)
 =\{(M-1)e^{i\theta}+e^{-i(M-1)\theta}\}/M.
\]
Equality in the triangle inequality proves global injectivity for \(M\ge3\);
\(M=2\) instead has the exact collision
\(p_{q,q+1}(h)=p_{q,q+1}(-h)\).  For \(M\ge3\), the cell \(j=q+1\) has
nonzero \(b\), so its opposite-sign probability difference is
\(2b_{q,q+1}h^3+O(h^5)\) against mass \(2a_{q,q+1}h^2+O(h^4)\).
Consequently the squared Hellinger separation is \(\asymp h^4\), fixing the
\(N^{-1/4}\) local and \(N^{-1/2}\) replacement scales used below.

\subsection{Near-router alignment, edge reduction, and resource}

Define
\[
 \delta_{\rm ch}(U)=\min_{\pi\in S_M}\max_q
 \{1-|(Ux_{\theta_q})_{\pi(q)}|^2\}.
\]

If \(\delta_{\rm ch}(U_N)\le\tau/N\), phase/permutation alignment makes every
diagonal squared modulus at least \(1-\tau/N\).  Unitarity then gives
\(\|A_N-I\|_F=O(N^{-1/2})\) for \(A_N=U_NU_0^\dagger\), so the principal
unitary logarithm yields
\[
 A_N=e^{N^{-1/2}K_N},\qquad K_N^\dagger=-K_N,\qquad
 \sup_N\|K_N\|<\infty.
\]
Differentiation gives
\(v_{qj}=e^{-i\alpha_{qj}}/(2\sin\alpha_{qj})\),
\(|v_{qj}|^2=a_{qj}\), and \(v_{jq}=-\overline v_{qj}\).  Hence
\[
 c_{qj}=\Re(v_{qj}^*K_{jq})/a_{qj}=c_{jq}\in\R.
\]
Only this coordinate enters the critical cross term.  Cauchy--Schwarz gives
\(|K_{jq}|^2\ge a_{qj}c_{qj}^2\), with equality for
\(K(c)_{jq}=c_{qj}v_{qj}\) for \(j\ne q\), with \(K(c)_{qq}=0\); the reverse
entry is its negative conjugate.
Thus unused quadratures consume resource without changing the limit.

Use the localization
\[
 \theta=\theta_q+Sr\etaN+\kappa\etaN^2,\qquad \etaN=N^{-1/4}.
\]
The signed coefficient is
\(Sr\{b_{qj}r^2+2a_{qj}(c_{qj}+\kappa)\}\). Hence a common edge translation
is absorbed by \(\kappa\), and the effective space is
\(\EM=\R^{E(K_M)}/\operatorname{span}\{\one\}\). Define
\begin{equation}
 \|[c]\|_{\rm rt}^2=
 \min_{\beta\in\R}\max_q\sum_{j\ne q}a_{qj}(c_{qj}+\beta)^2.
\label{supp:eq:rtnorm}
\end{equation}
Writing
\(f_c(\beta)=\max_q\sum_{j\ne q}a_{qj}(c_{qj}+\beta)^2\) and
\(A_\Sigma=(M^2-1)/12\), the function
\(f_c(\beta)-A_\Sigma\beta^2\) is convex. Its unique minimizer
\(\beta_\ast\) satisfies
\[
 f_c(\beta)\ge f_c(\beta_\ast)+A_\Sigma|\beta-\beta_\ast|^2.
\]
The nullspace consists exactly of constants. This routing-minimizing gauge
need not be Euclidean orthogonal.

For bounded \(c\), expansion of the routed columns gives
\begin{equation}
 \delta_{\rm ch}(e^{tK(c)}U_0)=
 t^2\max_q\sum_{j\ne q}a_{qj}c_{qj}^2+O(t^3).\label{supp:eq:defect}
\end{equation}
Thus \(t=N^{-1/2}\) is equivalent to an \(N^{-1}\) defect.  The preceding
alignment and finite-dimensional subsequence compactness put every feasible
limit in the quotient resource ball.  Shrinking a boundary tangent by
\(1-o(1)\) enforces the exact finite constraint without changing its limit.

\section{Proof of the Critical Minimax Limit}

For \(U_N=e^{\etaN^2K(c)}U_0\), write
\begin{align*}
 A^{\rm fin}_{N,qj}(r)&=(1-\eps_N)a_{qj}r^2+\lambda_N/M,\\
 A_{qj}(r,\lambda)&=a_{qj}r^2+\lambda/M.
\end{align*}

Uniformly in \(q,j\ne q,S\), and bounded \(r,\kappa,c\), analytic expansion
of the amplitude and then its squared modulus gives
\begin{equation}
 \widetilde p_{\theta,N}^{U_N}(j)
 =\etaN^2A^{\rm fin}_{N,qj}(r)
 +S\etaN^3d_{qj}(r,c,\kappa)+o(\etaN^3),\label{supp:eq:cell}
\end{equation}
where \(d_{qj}=r\{b_{qj}r^2+2a_{qj}(c_{qj}+\kappa)\}\).  Define
\(\gamma_{qj}=\left.\partial_h^2(U_0x_{\theta_q+h})_j\right|_{h=0}\), so that
\(\Re(v_{qj}^*\gamma_{qj})=b_{qj}\).  The amplitude is
\[
 v_{qj}h+\tfrac12\gamma_{qj}h^2+\etaN^2K(c)_{jq}
 +O(h^3+\etaN^2|h|+\etaN^4);
\]
substitute \(h=Sr\etaN+\kappa\etaN^2\), use
\(\Re(v_{qj}^*K(c)_{jq})=a_{qj}c_{qj}\), and retain the exact
\(1-\eps_N\) factor in the even term.

Let \(C_{N,j}\) be the dark counts and define
\(Z_{N,j}=N^{-1/4}\{C_{N,j}-N\etaN^2A^{\rm fin}_{N,qj}(r)\}\).

On compact local sets, the dark-count vector satisfies
\[
 (Z_{N,j})_{j\ne q}\Longrightarrow
 N\left(Sd_q,\diag\{A_{qj}(r,\lambda)\}_{j\ne q}\right),
\]
uniformly in those compact parameters.  To verify this, expand the one-frame
log characteristic function at \(N^{-1/4}u\).  Its accumulated quadratic
term tends to \(-\tfrac12\sum_jA_ju_j^2\), mixed multinomial covariances
vanish after scaling, and the accumulated third term is \(O(N^{-1/4})\).
The mean follows from \eqref{supp:eq:cell}; Cram\'er--Wold finishes.  Only this
sign component, not unrestricted experiment equivalence, is claimed.

Weighted projection of the centre direction yields
\begin{equation}
 \mu_{q,\lambda}^2(r,[c])=\min_{\kappa\in\R}
 \sum_{j\ne q}\frac{r^2\{b_{qj}r^2+2a_{qj}(c_{qj}+\kappa)\}^2}
 {a_{qj}r^2+\lambda/M}.
\label{supp:eq:mu}
\end{equation}
At \((r,\lambda)=(0,0)\), use the radial limit \(r\downarrow0\) with
\(\lambda=0\); no joint extension of \(\mu^2\) is asserted, and the chosen
endpoint value is multiplied by \(r^2=0\).
The projected scalar law is \(Y=\mu S+Z\), \(Z\sim N(0,1)\). For the
equal-prior binary problem, the posterior mean is
\(\widehat{rS}=r\tanh(\mu Y)\), and the common conditional risk is
\begin{equation}
 r^2\psi(\mu^2),\qquad
 \psi(s)=\E[\sech^2(s+\sqrt sZ)].\label{supp:eq:psi}
\end{equation}
The posterior-variance identity cited in the main paper gives
\(\psi'(s)=-\E[\operatorname{Var}(S\mid\sqrt sS+Z)^2]<0\) for \(s>0\).

Define
\begin{align}
 F_{M,\lambda}([c])&=\max_q\sup_{r\ge0}
 r^2\psi(\mu_{q,\lambda}^2(r,[c])),\label{supp:eq:F}\\
 \cM(\tau,\lambda)&=
 \min_{\|[c]\|_{\rm rt}^2\le\tau}F_{M,\lambda}([c]).\label{supp:eq:C}
\end{align}
At small radius, the risk is at most \(r^2\). At large radius,
\(\kappa^*=O(1)\) uniformly on bounded quotient balls and
\[
 \mu^2=r^4\sum_{j\ne q}b_{qj}^2/a_{qj}+O(r^2).
\]
The leading constant is positive for \(M\ge3\), so a hard-sign bound gives
exponential decay of \(\psi\). Consequently the radius supremum is uniformly
compact, \(F\) is continuous, and the finite-dimensional resource-ball
minimum is attained.

\subsection{Compact-action converse}

Take an arbitrary nearly optimal sequence and pass to a subsequence on which
the aligned effective fields \(c_N\) converge to \(c\). By \eqref{supp:eq:defect}
and lower semicontinuity, \(\|[c]\|_{\rm rt}^{2}\le\tau\). Fix \(q,r>0\), let
\(\kappa^*\) minimize \eqref{supp:eq:mu}, put
\(\theta_{N,S}=\theta_q+Sr\etaN+\kappa^*\etaN^2\), and write
\(A_{N,j}=A^{\rm fin}_{N,qj}(r)\), \(d_j=d_{qj}(r,c,\kappa^*)\), and let
\(d_{N,j}\) be its exact finite-\(N\) counterpart. Bounded-tangent
analyticity gives \(d_{N,j}\to d_j\), but no rate is used. There is a common part
\(\bar p_{N,j}=\etaN^2A_{N,j}+O(\etaN^4)\) such that
\(p_{N,j,S}=\bar p_{N,j}+S\etaN^3d_{N,j}+O(\etaN^5)\). Routed-cell cancellation
and a second-order log expansion therefore give, under either sign,
\[
\begin{aligned}
 \ell_N&:=\log\frac{dP_{N,+}}{dP_{N,-}}
 =2\etaN\sum_{j\ne q}\frac{d_{N,j}}{A_{N,j}}\\
 &\quad{}\times\{C_{N,j}-N\etaN^2A_{N,j}\}
   +o_{P_{N,S}}(1),\\
 \ell_N&\Longrightarrow \mathcal N(2S\mu^2,4\mu^2).
\end{aligned}
\]
Here \(A_{N,j}>0\), including when \(\lambda=0\), because \(r>0\); the accumulated
dark-cell cubic remainder is \(O(N\etaN^5)=o(1)\).  Thus the binary likelihood
ratio experiment converges, not merely its count statistic.

Project any rescaled circular action onto \([-r,r]\).  On the common
injectivity arc this cannot increase either endpoint loss, so the action and
loss are bounded.  Bounded binary-experiment risk convergence makes the
limiting average risk at least the Gaussian Bayes value
\(r^2\psi(\mu_{q,\lambda}^2(r,[c]))\), while maximum risk dominates average
risk.  Taking the supremum over \(q,r\), and using
\(F_{M,\lambda}([c])\ge\cM(\tau,\lambda)\), proves along every subsequence
\begin{equation}
 \liminf_N\sqrt N\,\mathcal R_N^*(\tau,\eps_N)
 \ge\cM(\tau,\lambda).\label{supp:eq:lower}
\end{equation}

\subsection{Uniform constructive upper bound}

This section proves the global achievability statement without relying on an
unspecified diagonal extraction. Put
\begin{equation}
 \etaN=N^{-1/4},\quad m_N=\lceil N^{3/4}\rceil,
 \quad R_N=\{\log(N+e)\}^{1/3}.
\label{supp:eq:sequences}
\end{equation}
For every sequence, also put
\begin{equation}
 \zeta_N=N^{-1/32},\qquad d_N=N^{-7/64};\label{supp:eq:guard}
\end{equation}
the guard is analytically essential only at zero limiting floor, but applying
it universally prevents the estimator from using the limiting regime as an
oracle. Direct calculation gives
\begin{align}
 N^{-1/4}R_N^5&\to0,\qquad N^{-1/4}\zeta_N^{-1}\to0,\nonumber\\
 d_N\zeta_N^{-3}&=N^{-1/64}\to0,\label{supp:eq:rates}\\
 d_N&=o(\zeta_N^2),\nonumber\\
 \sqrt N(N+1)^Me^{-cR_N^4}&\to0.\nonumber
\end{align}
The last limit holds for every \(c>0\), because
\(R_N^4=(\log(N+e))^{4/3}\gg\log N\).

\subsubsection{Growing-window expansion}

Let \(U_N=e^{\etaN^2K_N}U_0\), \(\sup_N\|K_N\|<\infty\), and
\(\theta=\theta_q+Sr\etaN+\kappa\etaN^2\), with
\(|r|,|\kappa|\le3R_N\). Analyticity of the finite Fourier sum and matrix
exponential, uniformly on the bounded tangent body, gives
\begin{equation}
 \widetilde p_{N,j}=\etaN^2A^{\rm fin}_{N,qj}(r)
 +S\etaN^3d_{N,qj}(r,c_N,\kappa)+\mathcal E_{N,qj},\label{supp:eq:grow}
\end{equation}
where exact finite signal/replacement factors may be retained in
\(d_{N,qj}\), and
\begin{equation}
 \sup|\mathcal E_{N,qj}|\le CN^{-1}(1+R_N^4).\label{supp:eq:growrem}
\end{equation}
The monomials are \(h^4,\etaN^2h^2,\etaN^4\), and centre remainders. When
inserted into the weighted contrast below, \eqref{supp:eq:growrem} contributes at
most
\[
 CN^{-1/4}(1+R_N^4)(R_N+\zeta_N^{-1})=o(1)
\]
by \eqref{supp:eq:rates}.

\subsubsection{Pilot node and squared radius}

Use the first \(m_N\) reports as an independent pilot. The routed cell has
probability \(1-O(\etaN^2R_N^2)\), while every dark cell has probability
\(O(\etaN^2R_N^2)\). Thus the routed-count maximizer, with the smallest-index
tie rule, satisfies uniformly on \(r\le3R_N\)
\begin{equation}
 \Pr(\widehat q\ne q)\le Ce^{-cm_N}.\label{supp:eq:qpilot}
\end{equation}
Let \(L_N\) be the pilot count outside \(\widehat q\), and
\(A_\Sigma=\sum_{j\ne q}a_{qj}=(M^2-1)/12\). Define
\begin{equation}
 \widehat r^2=\left[
 \frac{L_N/m_N-\eps_N(1-1/M)}{(1-\eps_N)\etaN^2A_\Sigma}
 \right]_+ .\label{supp:eq:rhat}
\end{equation}
Bernstein's inequality and \eqref{supp:eq:grow} give
\begin{equation}
 \Pr(|\widehat r^2-r^2|>d_N)
 \le C\exp\left\{-c\frac{m_N\etaN^2d_N^2}{1+R_N^2}\right\}+Ce^{-cm_N}.
\label{supp:eq:rpilot}
\end{equation}
The exponent is \(N^{1/32}/(1+R_N^2)\to\infty\). The deterministic bias
after division in \eqref{supp:eq:rhat} is
\(O(\etaN R_N^2+\etaN^2R_N^4)=o(d_N)\). A deviation of order \(R_N^2\)
likewise gives false-local and false-outer probabilities at most
\begin{equation}
 C\exp\{-cN^{1/4}R_N^2\}.\label{supp:eq:pilotbranch}
\end{equation}

\subsubsection{Exact finite-sample orthogonality}

The construction retains \(\lambda_N\) because substitution of its limit is
not uniform on the growing window. For a fixed upper-bound field \(c_N\),
profile the exact finite
quadratic having denominators
\[
 A^{\rm fin}_{N,qj}(r)=(1-\eps_N)a_{qj}r^2+\lambda_N/M
\]
and define its exact profiled strength
\begin{equation}
\begin{aligned}
 s_{N,q}(r,c_N)&=\min_{\kappa\in\R}\sum_{j\ne q}
 \frac{(1-\eps_N)^2r^2}{A^{\rm fin}_{N,qj}(r)}\\
 &\quad\times\{b_{qj}r^2+2a_{qj}(c_{N,qj}+\kappa)\}^2 .
\end{aligned}\label{supp:eq:sfinite}
\end{equation}
At \(r=\lambda_N=0\), \(s_{N,q}\) denotes the radial limit
\(r\downarrow0\) with \(\lambda_N=0\), and the local risk product is zero.
The weights are evaluated only for
\(r\ge\zeta_N>0\); below the guard the estimator returns the grid node, so no
\(0/0\) term is evaluated.
Denote the minimizer in \eqref{supp:eq:sfinite} by
\(\kappa_N^*(r,c_N)\). Define
\begin{equation}
 w^{\rm fin}_{N,qj}(r,c_N)=
 \frac{(1-\eps_N)r\{b_{qj}r^2+2a_{qj}(c_{N,qj}+\kappa_N^*)\}}
 {A^{\rm fin}_{N,qj}(r)}.\label{supp:eq:weights}
\end{equation}
The finite normal equation gives, identically for every \(r\),
\begin{equation}
 \boxed{\sum_{j\ne q}w^{\rm fin}_{N,qj}(r,c_N)a_{qj}=0.}\label{supp:eq:orth}
\end{equation}
This cancels the leading radial mean and the second-order centre nuisance even
when \(r\) is replaced by \(\widehat r\). It also removes any need for a
rate of \(\lambda_N\to\lambda\).

For \(r\ge\zeta_N\), uniformly in the exact finite floor,
\begin{align}
 |w_N|&\le C(R_N+\zeta_N^{-1}),\nonumber\\
 |\partial_rw_N|&\le C(1+\zeta_N^{-2}),&
 |\partial_cw_N|&\le C\zeta_N^{-1}.\label{supp:eq:wbound}
\end{align}
On the pilot event, \(|\widehat r-r|\le d_N/(r+\widehat r)\), so the combined
profile and weight perturbation is at most
\(Cd_N\zeta_N^{-3}=o(1)\). A positive limiting floor makes all denominators
regular; at \(r=0\), viewed as a function of \(r^2\), the required bound is
one-half H\"older and the plug-in error is \(O(d_N^{1/2})\).

\subsubsection{Finite-profile Gaussian approximation and risk transfer}

Let \(n_N=N-m_N\), let \(C_j\) be the independent main-sample counts, and
put
\begin{equation}
 T_N=N^{-1/4}\sum_{j\ne\widehat q}
 w^{\rm fin}_{N,\widehat qj}(\widehat r,c_N)
 \{C_j-n_N\eps_N/M\}.\label{supp:eq:T}
\end{equation}
When \(\widehat r<\zeta_N\), set \(T_N=0\). For a true
local point \((q,r,S)\), define the good pilot event
\begin{equation}
 \mathcal E_N(q,r)=\{\widehat q=q,
 |\widehat r^2-r^2|\le d_N\}.\label{supp:eq:goodpilot}
\end{equation}
Equations \eqref{supp:eq:qpilot}--\eqref{supp:eq:rpilot} imply
\begin{equation}
 \sup_{\substack{q,c_N\in\mathcal K_B\\0\le r\le3R_N}}
 \sqrt N\Pr\{\mathcal E_N(q,r)^c\}\longrightarrow0,\label{supp:eq:pilotnegligible}
\end{equation}
where \(\mathcal K_B\) is any fixed compact body of routing-minimizing
representatives. This is the event on which every conditional calculation
below is made.

\begin{lemma}[Finite-profile Gaussian approximation and uniform risk
transfer]\label{supp:lem:finiteprofile}
Assume \(\lambda_N=\sqrt N\eps_N\to\lambda\in[0,\infty)\), with no rate
condition. Then:
\begin{enumerate}[label=(\Alph*)]
\item Conditionally on \(\mathcal E_N(q,r)\), uniformly in \(q\),
\(c_N\in\mathcal K_B\), and on the guarded growing window
\(\zeta_N\le r\le3R_N\),
\begin{equation}
 \begin{aligned}
 \mathcal L(T_N\mid\mathcal E_N)=\mathcal N\{&S s_{N,q}(r,c_N),\\
 &s_{N,q}(r,c_N)\}+o_{\rm BL}(1).
 \end{aligned}
 \label{supp:eq:finiteprofileCLT}
\end{equation}
The bounded-Lipschitz remainder is \(o(R_N^{-2})\).

\item Let \(R_{N,q,r}^{\rm fin}(c_N)\) be the \(N^{1/2}\)-normalized
conditional squared risk of the plug-in rule
\(\widehat{rS}=\widehat r\tanh(T_N)\), with the grid returned below the
universal guard. After restoring the negligible pilot complement,
\begin{equation}
 \sup_{\substack{q,c_N\in\mathcal K_B\\0\le r\le3R_N}}
 \left|R_{N,q,r}^{\rm fin}(c_N)-
 r^2\psi\{\mu_{q,\lambda}^2(r,[c_N])\}\right|\to0.
 \label{supp:eq:uniformrisktransfer}
\end{equation}
\end{enumerate}
\end{lemma}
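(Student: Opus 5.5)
The plan is to prove part (A) by a conditional Berry--Esseen argument for the weighted sum \(T_N\), and then to obtain part (B) from (A) by splitting the radius range into three zones.

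\emph{Part (A).} Conditionally on the pilot, the main sample is an independent multinomial with \(n_N\) trials, and \(\widehat q,\widehat r\) are fixed. So \(T_N\) is a sum of \(n_N\) IID one-frame terms \(N^{-1/4}w_{\widehat q J}\ind\{J\ne\widehat q\}\), after subtracting the known replacement offset. On \(\mathcal E_N(q,r)\) with \(r\ge\zeta_N\), the bound \eqref{supp:eq:wbound} makes each centered summand at most \(CN^{-1/4}(R_N+\zeta_N^{-1})\).

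For the conditional mean I will insert \eqref{supp:eq:grow} at \(\theta=\theta_q+Sr\etaN+\kappa\etaN^2\):
\begin{itemize}
\item the even part \(\etaN^2A^{\rm fin}_{N,qj}(r)\) contributes \(N^{1/4}\sum_j w_j(1-\eps_N)a_{qj}r^2\), which vanishes by the normal equation \eqref{supp:eq:orth};
\item the \(\kappa\)-dependent part of the odd term is likewise proportional to \(\sum_j w_ja_{qj}\), so it also vanishes;
\item the remaining odd part gives exactly \(S\sum_j w_jd_{N,qj}=Ss_{N,q}\) by the definition of the profiled minimizer;
\item the remainder \(\mathcal E_{N,qj}\) contributes \(o(1)\) by \eqref{supp:eq:growrem} and \eqref{supp:eq:rates}.
\end{itemize}
The conditional variance is \(N^{-1/2}n_N\sum_j w_j^2\widetilde p_j\) minus a squared-mean correction of order \(\etaN^2\cdot\)polynomial. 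This equals \(\sum_j w_j^2A^{\rm fin}_{N,qj}=s_{N,q}\) up to \(o(1)\), again using that the weights are exactly the profiled score.

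Replacing \(r\) by \(\widehat r\) in the weights costs \(Cd_N\zeta_N^{-3}=o(1)\) in both mean and variance. Since the normal equation holds identically in \(r\), no first-order radial bias is created by the plug-in. The accumulated third absolute moment is at most the summand bound times the variance, which is \eqref{eq:BEbound}. Berry--Esseen then gives Kolmogorov, and hence bounded-Lipschitz, closeness to \(\mathcal N(Ss,s)\). When \(s\) is small, a direct Chebyshev comparison handles the degenerate endpoint. Tracking the explicit powers shows the error is a negative power of \(N\) times polylogarithmic factors, hence \(o(R_N^{-2})\).

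\emph{Part (B), split into three radius zones.} Write the normalized risk as \(\E[(\widehat r\tanh T_N-rS)^2\mid\mathcal E_N]\). Replacing \(\widehat r\) by \(r\) costs \(O(d_N/\zeta_N+d_N^{1/2})\). The test function \(y\mapsto(r\tanh y-rS)^2\) has bounded-Lipschitz norm \(O(R_N^2)\), so (A) transfers the risk to the Gaussian value \(r^2\psi(s_{N,q})\), using \(\E(r\tanh(Y)-rS)^2=r^2\psi(s)\) for \(Y\sim\mathcal N(Ss,s)\). The pilot complement is removed by \eqref{supp:eq:pilotnegligible}, since the loss is bounded by a constant. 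It remains to compare \(r^2\psi(s_{N,q})\) with \(r^2\psi(\mu^2_{q,\lambda})\):
\begin{itemize}
\item \emph{Small zone} \(r\le\delta\) (including the guard region, where the rule returns the node and the risk is at most \(r^2\)): both quantities are at most \(\delta^2\), giving \eqref{eq:smallzone}.
\item \emph{Middle zone} \(\delta\le r\le R\): the denominators are bounded below by \(a_{qj}\delta^2/2\), so \(s_{N,q}\to\mu^2\) uniformly by joint continuity of the profiled quadratic in \((\eps_N,\lambda_N,r,c)\) on a compact set. Uniform continuity of \(\psi\) then gives convergence of the risk.
\item \emph{Large zone} \(R\le r\le3R_N\): the minimizing centre is \(O(1)\) by antisymmetry of \(b\), because the \(b\)-terms cancel in the \(\kappa\)-normal equation at leading order. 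Hence both strengths are at least \(cr^4\), and the hard-sign bound \(\psi(s)\le Ce^{-cs}\) makes both risks at most \(Cr^2e^{-cr^4}\), which is small for large \(R\).
\end{itemize}
Let \(N\to\infty\), then \(\delta\downarrow0\), then \(R\uparrow\infty\).

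\emph{Expected main obstacle.} I expect the hardest step to be the uniformity of the Gaussian approximation near the guard when \(\lambda=0\). There the weights blow up like \(\zeta_N^{-1}\) and the variance \(s_{N,q}\) may be small, so the Berry--Esseen ratio (third moment over variance\(^{3/2}\)) could fail to vanish. My first approach is to avoid the ratio altogether. I would bound the bounded-Lipschitz distance by a Lindeberg replacement argument, whose error is controlled by the absolute third moment alone, without normalizing by the variance, plus a variance-matching term. That gives \(CN^{-1/4}(R_N^5+\zeta_N^{-1})\) directly, uniformly including degenerate variance.
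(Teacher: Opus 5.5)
Your proposal is correct and follows the paper's proof essentially step for step. The shared steps are: the conditional Berry--Esseen argument with summand bound \(CN^{-1/4}(R_N+\zeta_N^{-1})\); the exact normal equation, which cancels the radial and centre means even after plugging in \(\widehat r\), with plug-in cost \(O(d_N\zeta_N^{-3})\); a Chebyshev comparison at small strength; the \(O(R_N^2)\) bounded-Lipschitz squared-loss test function; and the three-zone risk transfer with limits taken in the order \(N\to\infty\), \(\delta\downarrow0\), \(R\uparrow\infty\). One small caveat concerns your Lindeberg fallback: for merely bounded-Lipschitz test functions, smoothing gives roughly the cube root of the third-moment bound rather than \(CN^{-1/4}(R_N^5+\zeta_N^{-1})\) itself, but this is harmless because every error term is a negative power of \(N\) times polylogarithmic factors.
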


\begin{proof}
Conditionally on the pilot, \eqref{supp:eq:T} is a sum over independent frames,
not over independent multinomial cells. A centered one-frame summand is
bounded by
\begin{equation}
 C N^{-1/4}(R_N+\zeta_N^{-1}),\label{supp:eq:summandbound}
\end{equation}
uniformly over the guarded set. Its accumulated third absolute moment is
\begin{equation}
 \begin{aligned}
 C N^{-1/4}\sum_{j\ne q}A^{\rm fin}_{N,qj}|w_{N,qj}|^3
 &\le C N^{-1/4}(R_N^5+\zeta_N^{-1})\\
 &=o(R_N^{-2}).
 \end{aligned}\label{supp:eq:BE}
\end{equation}
using \(A|w|^3\le C(r^5+r^{-1})\), \eqref{supp:eq:rates}, and the absence of the
singular term at a positive limiting floor. The exact conditional variance is
\begin{equation}
 \sum_{j\ne q}A^{\rm fin}_{N,qj}
 \{w^{\rm fin}_{N,qj}(r,c_N)\}^2
 =s_{N,q}(r,c_N),\label{supp:eq:finitevariance}
\end{equation}
up to the harmless factors \(n_N/N\to1\) and the uniformly negligible
multinomial cross-covariance. The signed mean equals \(S s_{N,q}\) to the
same order. Equations \eqref{supp:eq:orth} and \eqref{supp:eq:wbound} make the plug-in
perturbation \(O(d_N\zeta_N^{-3})=o(R_N^{-2})\); the growing-window remainder
is controlled by \eqref{supp:eq:growrem}. Let \(e_N\) be the maximum of the
third-moment, mean, variance, plug-in, and growing-window error bounds just
displayed. The chosen sequences give \(e_N=o(R_N^{-k})\) for every fixed
\(k\). Set \(\rho_N=e_N^{1/2}\). For strengths at least \(\rho_N\),
variance-normalized Berry--Esseen smoothing and rescaling give
bounded-Lipschitz error \(O(e_N/\rho_N)=O(e_N^{1/2})\). Below \(\rho_N\),
Chebyshev's inequality for the centered statistic and its Gaussian target
gives \(O(\rho_N^{1/2}+e_N^{1/2})=O(e_N^{1/4})\). Both are
\(o(R_N^{-2})\), proving \eqref{supp:eq:finiteprofileCLT} without an unstated
positive-variance assumption.

On the guarded window the squared-loss function
\(f_{r,\widehat r,S}(t)=(\widehat r\tanh t-rS)^2\) has supremum and Lipschitz
constant \(O(R_N^2)\). Applying Part A to
\(f_{r,\widehat r,S}/(CR_N^2)\), and then using the pilot perturbation bounds,
therefore gives the finite-profile risk
\begin{equation}
 R_{N,q,r}^{\rm fin}(c_N)=r^2\psi\{s_{N,q}(r,c_N)\}+o(1)
 \label{supp:eq:finiteprofilerisk}
\end{equation}
uniformly on the guard; \eqref{supp:eq:pilotnegligible} removes conditioning. It
remains to transfer the risk, not the strength, to the limiting profile.

Fix \(0<\delta<R<\infty\) and split
\([0,3R_N]=[0,\delta]\cup[\delta,R]\cup[R,3R_N]\).
For the small zone, \(0\le\psi\le1\) gives exactly
\begin{equation}
 \sup_{r\le\delta}\left|r^2\psi(s_{N,q})-
 r^2\psi(\mu_{q,\lambda}^2)\right|\le2\delta^2.
 \label{supp:eq:smallrisk}
\end{equation}
This also covers the zero-critical-floor guard and is the step that absorbs an
arbitrarily slow \(\lambda_N\to0\).

On \([\delta,R]\), every denominator is uniformly bounded away from its
singular endpoint. Since \(\lambda_N\to\lambda\), \(\eps_N\to0\), and
\(\mathcal K_B\) is compact, continuity of a strictly convex one-dimensional
quadratic profile gives
\begin{equation}
 \sup_{\substack{q,c_N\in\mathcal K_B\\\delta\le r\le R}}
 |s_{N,q}(r,c_N)-\mu_{q,\lambda}^2(r,[c_N])|\to0.
 \label{supp:eq:compactstrength}
\end{equation}
The strengths occupy a compact interval, so uniform continuity of \(\psi\)
transfers the intermediate-zone risk.

For the large zone, pair \(j=q+d\) with \(j=q-d\). Their \(b\)'s cancel,
so the finite and limiting profile normal equations give uniformly bounded
centres on \(\mathcal K_B\). Consequently, uniformly for all sufficiently
large fixed \(R\) and all large \(N\),
\begin{equation}
 \begin{gathered}
 s_{N,q}(r,c_N)\ge cr^4,\qquad
 \mu_{q,\lambda}^2(r,[c_N])\ge cr^4,\\
 R\le r\le3R_N.
 \end{gathered}
 \label{supp:eq:cubiccoercivity}
\end{equation}
Indeed, after division by \(r^4\), both leading constants converge uniformly
to \(\sum_{j\ne q}b_{qj}^2/a_{qj}>0\); the fixed channel \(j=q+1\) proves
strict positivity, including even \(M\). The hard-sign rule bounds the
binary Gaussian Bayes risk by
\(\psi(s)\le4\Phi(-\sqrt s)\le2e^{-s/2}\). Hence
\begin{equation}
 r^2\psi(s_{N,q})+r^2\psi(\mu_{q,\lambda}^2)
 \le Cr^2e^{-cr^4},\label{supp:eq:largerisk}
\end{equation}
whose supremum over \(r\ge R\) vanishes as \(R\to\infty\).

The order of limits is essential: fix \((\delta,R)\), let \(N\to\infty\),
then let \(\delta\downarrow0\), and finally let \(R\uparrow\infty\).
Equations \eqref{supp:eq:smallrisk}, \eqref{supp:eq:compactstrength}, and
\eqref{supp:eq:largerisk} prove Part B. Uniform convergence of the finite
sign strength fails at \(r=0\), so the argument transfers the risk-weighted
quantity directly; the prefactor \(r^2\) suppresses the nonuniform regime.
\end{proof}

A tempting stronger uniform-strength assertion is false.  For \(M=4\), put
zero on cycle edges and one on diameters, take
\(\lambda_N=1/\log N\) and \(r_N=2N^{-1/32}\).  Then
\(\mu_{0,0}^2(r_N,[c])\to4/5\) but \(s_{N,0}(r_N,c)\to0\), while the
risk-weighted discrepancy is at most \(2r_N^2\to0\). This counterexample
explains why the theorem asserts uniform risk transfer rather than uniform
convergence of the raw sign strength.

\subsubsection{Uniform outer Hellinger inverse}

For every node select \(j(q)=q+1\pmod M\). Put \(t_N=\etaN^2\). Analyticity
of the exact near-DFT probability gives its odd part
\begin{equation}
 \begin{aligned}
 p_N(h)-p_N(-h)={}&2b_{qj}h^3+4a_{qj}c_{N,qj}t_Nh\\
 &+O(h^5+t_Nh^3+t_N^2h).
 \end{aligned}\label{supp:eq:odd}
\end{equation}
When \(R_N\etaN\le|h|\le h_0\), the perturbation-to-cubic ratios are at most
\(R_N^{-2}\), \(\etaN^2R_N^{-2}\), and \(h_0^2\). Choose \(h_0\) small and
then \(N\) large. Since \(b_{q,q+1}\ne0\),
\(|p_N(h)-p_N(-h)|\ge c|h|^3\). The sum of the two replaced probabilities is
at most \(Ch^2\), because \(\eps_N=O(\etaN^2)=O(h^2/R_N^2)\). Thus
\begin{equation}
 H^2(\widetilde p_{\theta_q+h,N}^{U_N},
     \widetilde p_{\theta_q-h,N}^{U_N})\ge c_1h^4.\label{supp:eq:oppH}
\end{equation}

For same-side points whose segment stays outside the local band, the selected
square-root coordinate has derivative
\[
 \frac{(1-\eps_N)p_N'(h)}{2\sqrt{(1-\eps_N)p_N(h)+\eps_N/M}}.
\]
Here \(p_N'(h)=2a_{qj}h+O(h^2+t_N)\), its sign is fixed, and the denominator
is \(O(|h|)\). Integration gives
\begin{equation}
 H^2(\widetilde p_{h_1,N}^{U_N},\widetilde p_{h_2,N}^{U_N})
 \ge c_2|h_1-h_2|^2.\label{supp:eq:sameH}
\end{equation}
For \(h,k\in[R_N\etaN,h_0]\), opposite unequal radii obey
\begin{equation}
 H^2(\widetilde p_{\theta_q+h,N}^{U_N},
     \widetilde p_{\theta_q-k,N}^{U_N})\ge
 c\{(h-k)^2+\min(h,k)^4\}.\label{supp:eq:unequalH}
\end{equation}
Use the paired cells \(j=q\pm1\), which have the same \(a>0\) and opposite
nonzero \(b\). For \(m=\min(h,k)\),
\(\eps_N/m^2,t_N/m^2=O(R_N^{-2})\), and their square-root differences satisfy
\[
\begin{aligned}
 \Delta_{\pm}&=\sqrt a\,(h-k)
 \mathbin{\pm}\frac{b}{2\sqrt a}(h^2+k^2)+e_{\pm,N},\\
 |e_{+,N}|+|e_{-,N}|&\le C(h_0+R_N^{-2})\\
 &\quad{}\times\{|h-k|+h^2+k^2\}.
\end{aligned}
\]
The leading two-by-two map is uniformly nonsingular. Choose \(h_0\) small and
then \(N\) large; its squared norm, with the cross term canceled, dominates
\((h-k)^2+(h^2+k^2)^2\). The error is a small perturbation, and the full-law
distance dominates these two coordinates, proving \eqref{supp:eq:unequalH}
uniformly.
Outside fixed grid neighborhoods, exact-router injectivity first gives a
uniform nonzero Hellinger separation for pairs at circular distance at least
any fixed \(\rho>0\). A local metric bound on compact sets separated from
the routing nodes follows from the exact-router moment.
Differentiating the exact first moment displayed in S1 gives
\begin{equation}
 m_1'(\theta)=\frac{i(M-1)}M e^{i\theta}(1-e^{-iM\theta}),\label{supp:eq:m1prime}
\end{equation}
which is nonzero away from the grid. Every exact-router coordinate probability
is positive there. Hence the derivative of
\(\theta\mapsto(\sqrt{p_j(\theta)})_{j=0}^{M-1}\) is nonzero; on a fixed
compact set its norm has a positive minimum. Uniform Taylor expansion gives
constants \(c,\rho>0\) such that
\begin{equation}
 H^2(P_\theta,P_{\theta'})\ge
 c\distT(\theta,\theta')^2,\qquad
 \distT(\theta,\theta')\le\rho.\label{supp:eq:compactHmetric}
\end{equation}
The \(C^1\) convergence of the finite near-router laws transfers this bound,
and compact injectivity transfers the nonlocal separation, uniformly for all
large \(N\).

Let \(\widehat p\) be the main empirical law and let
\(\Theta_N^{\rm out}\) be the circle with the open \(R_N\etaN\) grid
neighborhoods removed. Define, with smallest-angle tie breaking,
\begin{equation}
 \widehat\theta_{\rm out}\in\argmin_{\phi\in\Theta_N^{\rm out}}
 H(\widehat p,\widetilde p_{\phi,N}^{U_N}).\label{supp:eq:MHD}
\end{equation}
Compactness gives existence and measurability. For \(n\) multinomial trials,
\[
 \E_pH^2(\widehat p,p)\le M/n,
\]
because \((\sqrt{\widehat p_j}-\sqrt p_j)^2\le
(\widehat p_j-p_j)^2/p_j\) whenever \(p_j>0\). Moreover the method of types
and \(D(q\|p)\ge H^2(q,p)\) give
\begin{equation}
 \Pr_p\{H^2(\widehat p,p)\ge u\}\le(n+1)^Me^{-nu}.\label{supp:eq:types}
\end{equation}
Equations \eqref{supp:eq:sameH}--\eqref{supp:eq:types} yield ordinary \(O(N^{-1})\)
correct-branch risk, while \eqref{supp:eq:oppH} gives side-confusion probability
at most \((N+1)^Me^{-cNh^4}\). Hence
\begin{equation}
 \sup_{\theta\in\Theta_N^{\rm out}}
 \E\distT^2(\widehat\theta_{\rm out},\theta)=O(N^{-1}).\label{supp:eq:outerrisk}
\end{equation}

\subsubsection{Composite measurable estimator}

Invoke the local rule when the pilot identifies a routed node and
\(\widehat r\le2R_N\); invoke \eqref{supp:eq:MHD} otherwise. Partition the true
parameter into \(r\le R_N\), the transition band \(R_N<r<3R_N\), and the
outer region. Wrong-node, false-local, and false-outer contributions vanish
after multiplication by \(\sqrt N\) by
\eqref{supp:eq:qpilot}, \eqref{supp:eq:pilotbranch}, and bounded circular loss. Side
confusion vanishes by \eqref{supp:eq:rates} and \eqref{supp:eq:types}. In the transition
band the local estimates above apply through \(3R_N\), giving risk
\(O(r^2e^{-cr^4})\), while the outer risk is \(O(N^{-1})\); hence either
branch is \(o(N^{-1/2})\). Therefore one estimator
satisfies
\begin{align}
 &\sup_{q,0\le r\le R_N}
 \left|\sqrt N\E\distT^2(\widehat\theta,\theta)
 -r^2\psi(\mu_{q,\lambda}^2(r,[c]))\right|\to0,\label{supp:eq:localuniform}\\
 &\sup_{\distT(\theta,\{\theta_q\})>R_N\etaN}
 \sqrt N\E\distT^2(\widehat\theta,\theta)\to0.\label{supp:eq:outeruniform}
\end{align}

Choose an optimizer of \eqref{supp:eq:C}, its routing-minimizing representative,
and the phase-aligned tangent, shrunk by \(1-o(1)\) if needed. Equations
\eqref{supp:eq:localuniform}--\eqref{supp:eq:outeruniform} give
\begin{equation}
 \limsup_N\sqrt N\,\mathcal R_N^*(\tau,\eps_N)
 \le\cM(\tau,\lambda).\label{supp:eq:upper}
\end{equation}
Combining \eqref{supp:eq:lower} and \eqref{supp:eq:upper} proves the sharp equality.

\section{Quotient geometry}

This section records the exact normal form behind the compatibility
calculations.  For a fixed node and radius, put
\begin{equation}
 w_{qj}=\frac{a_{qj}^{2}}{a_{qj}r^{2}+\lambda/M},\qquad
 u_{qj}=\frac{b_{qj}r^{2}}{2a_{qj}},
 \label{supp:eq:suppwu}
\end{equation}
and define the weighted star quotient seminorm
\[
 \|[z]\|_{q,r,\lambda}^{2}
 =\min_{\alpha\in\R}\sum_{j\ne q}w_{qj}(z_{qj}+\alpha)^2.
\]
Factoring \(2a_{qj}\) from every numerator in the main profile gives,
identically,
\begin{equation}
 \mu_{q,\lambda}^{2}(r,[c])
 =4r^{2}\|[c_q]+[u_q(r)]\|_{q,r,\lambda}^{2}.
 \label{supp:eq:suppquotientdistance}
\end{equation}
The minimizer is the weighted mean
\(
 \alpha^*=-\sum_jw_{qj}(c_{qj}+u_{qj})/\sum_jw_{qj}
\), so it is unique for \(r>0\).  This proves both quotient invariance and
the claim that the design approximates all radius curves with one shared
edge field.  Differentiating at \(c=0\), where antisymmetry makes
\(\alpha^*=0\), gives the divergence operator in the main paper.

For a line \(c=tg\), profiling a positive quadratic form in \((t,\alpha)\)
produces an exact quadratic polynomial in \(t\).  If the divergence term
vanishes, completing the square leaves
\begin{equation}
 \mu_{q,\lambda}^{2}(r,tg)-\mu_{q,\lambda}^{2}(r,0)
 =4t^{2}r^{2}\min_{\alpha}
 \sum_{j\ne q}w_{qj}(g_{qj}+\alpha)^2.
 \label{supp:eq:suppcurvature}
\end{equation}
All weights are positive.  Equality therefore means exactly that the star is
constant. Thus no cubic or higher-order term occurs along a line in the
canonical profile.

\section{Positive-Critical-Replacement Compatibility Spectrum}

Let \(L=\lfloor(M-1)/2\rfloor\), \(\lambda>0\), and use unordered distance
coordinates \(y_d(q)=g_{\{q,q+d\}}\), \(1\le d\le L\).  The restriction
\(d<M/2\) makes this representation unique.  If \(M\) is even, distance
\(M/2\) is instead a matching of \(M/2\) unordered edges; its coefficient
\(b\) is zero, so these variables are free at every radius.

Set \(a_d=\{4\sin^2(\pi d/M)\}^{-1}\),
\(\beta_d=\cos(\pi d/M)/\{4\sin^3(\pi d/M)\}\), and
\(\rho_d=\lambda/(Ma_d)\).
For \(1\le d\le L\), all \(\beta_d\) are nonzero and the \(a_d\), hence the
\(\rho_d\), are pairwise distinct.  With orientation \(q\to q+d\), the
neutrality equation at \(x_s=r_s^2>0\), after deletion of a common factor,
is
\begin{equation}
 \sum_{d=1}^{L}\frac{\beta_d}{x_s+\rho_d}
 \{y_d(q)-y_d(q-d)\}=0.
 \label{supp:eq:suppbalance}
\end{equation}

Complexify this real system and apply the length-\(M\) DFT in \(q\).  In
character \(\ell\), the \(d\)th difference acquires the factor
\(1-e^{-2\pi i\ell d/M}\), which vanishes precisely when \(M\mid\ell d\).
Thus
\[
 \iota_\ell=\#\{1\le d\le L:M\mid\ell d\}
 =\left\lfloor\frac{L\gcd(M,\ell)}M\right\rfloor
\]
columns are inactive. On the remaining columns the \(k\) radius evaluations give
a Cauchy matrix, up to invertible column scalings.  Every \(m\times m\) minor
has determinant
\begin{equation}
 \begin{aligned}
 \det\!\left[\frac1{x_{s_u}+\rho_{d_v}}\right]_{u,v=1}^{m}
 &={} \frac{\prod_{u<v}(x_{s_v}-x_{s_u})}
 {\prod_{u,v}(x_{s_u}+\rho_{d_v})}\\
 &\quad{}\times\prod_{u<v}(\rho_{d_v}-\rho_{d_u}),
 \end{aligned}
 \label{supp:eq:suppcauchy}
\end{equation}
up to a fixed sign, and is nonzero for distinct radii.  The block rank is
therefore \(\min\{k,L-\iota_\ell\}\).

The stacked map is real, so its real and complex ranks coincide; the complex
DFT is invertible.  Thus its real kernel dimension sums all \(M\) complex
block nullities, not one per conjugate pair.  Adding the \(M/2\) diameter
variables when even and removing the global constant-edge gauge gives
\begin{equation}
 \dim_{\R}\NM^{(k)}=
 \sum_{\ell=0}^{M-1}\left[L-\min\{k,L-\iota_\ell\}\right]
 +\ind_{\{2\mid M\}}\frac M2-1.
 \label{supp:eq:suppspectrum}
\end{equation}

At \(k=L\), every active block has full column rank.  The remaining inactive
coefficients are exactly \(y_d(q)=y_d(q-d)\), a condition sufficient at every
radius and constant on the \(\gcd(M,d)\) cosets generated by \(d\).  Hence
\begin{equation}
 \dim\NM(\lambda)=\sum_{d=1}^{L}\gcd(M,d)
 +\ind_{\{2\mid M\}}\frac M2-1.
 \label{supp:eq:suppterminaldim}
\end{equation}
Sharpness is radius-choice independent. If \(k<L\), character \(\ell=1\)
has no inactive columns and Cauchy nullity \(L-k\); conjugate pairing turns a
nonzero null vector into a real field neutral at the evaluated radii but not
persistent.  Repeats add no rank, and \(r=0\) gives the zero operator.  At
\(\lambda=0\), all poles merge, positive-radius rows are proportional, and
one radius certifies dimension \(M(M-3)/2\).

Exact rank therefore persists for every \(\lambda>0\), but it does not
provide a uniform stability guarantee as magnitudes coalesce or as the
positive critical replacement floor approaches zero.

At positive floor, a persistent star assigns one value per undirected
separation, so quotienting its constant value gives curvature rank
\(\lfloor M/2\rfloor-1\).  At zero floor, rescale active oriented edges by
\(b_{qj}\).  Values at \(q\) obey one divergence constraint; every compatible
remaining imbalance is routable on a spanning tree after deleting \(q\).
With a free even diameter this attains star dimension \(M-2\), hence rank
\(M-3\) modulo constants.  Finally, if every star is constant, shared edges
make all constants equal, proving positive definiteness of summed curvature
on the global quotient.

\section{Global three-channel rigidity}

For \(M=3\), \(a_{q,q\pm1}=1/3\) and
\(b_{q,q\pm1}=\pm1/(3\sqrt3)\).  Let
\(\delta_q=c_{q,q+1}-c_{q,q-1}\), \(x=r^2\), and
\begin{equation}
\begin{aligned}
 f_\lambda(\delta,x)&=x\psi\!\left[
 \frac{2x}{3(x+\lambda)}\left(\delta+\frac{x}{\sqrt3}\right)^2\right],\\
 G_\lambda(\delta)&=\sup_{x\ge0}f_\lambda(\delta,x).
\end{aligned}
\label{supp:eq:suppM3profile}
\end{equation}
The extension at \(x=0\) is zero.  Since \(\psi(s)\) decays exponentially,
the supremum is positive and attained in the interior; over every compact
\(\delta\)-set, all maximizers lie in one compact subinterval of
\((0,\infty)\).  If \(\delta<0\), put \(x_0=-\sqrt3\delta\).  For \(x<x_0\),
\(f_\lambda(\delta,x)\le x<x_0\); at \(x_0\) it equals \(x_0\); and at
\(x_0+\epsilon\) it is \(x_0+\epsilon+O(\epsilon^2)>x_0\).  Thus every maximizer satisfies
\(\delta+x/\sqrt3>0\).

At every active \(x\),
\[
 \partial_\delta f_\lambda(\delta,x)=
 \frac{4x^2}{3(x+\lambda)}
 \left(\delta+\frac{x}{\sqrt3}\right)\psi'(s_\lambda)<0.
\]
For \(h_n\downarrow0\), compactness gives a subsequential limit of maximizers
of \(f_\lambda(\delta+h_n,\cdot)\) that maximizes at \(\delta\). The
mean-value theorem bounds the upper right Dini derivative of \(G_\lambda\) by
the largest displayed derivative on this compact active set, hence strictly
below zero. The Dini mean-value lemma therefore makes \(G_\lambda\) strictly
decreasing.

Finally, \(\delta_0+\delta_1+\delta_2=0\), and all three vanish only for a
constant edge field.  Any nonzero quotient has a negative \(\delta_q\), so
\(F_{3,\lambda}([c])\ge G_\lambda(\delta_q)>G_\lambda(0)\).  The zero class is
therefore the unique global canonical minimizer for every finite resource.
For \(M\ge4\), the nearest-neighbor ring field is persistent at every floor
and radius and has a nonconstant star at every node; the exact curvature
identity gives the strict opposite conclusion.

\section{Arithmetic Localization of Curvature Visibility}

At a positive critical replacement floor, call \(q\) curvature-visible when
its star is nonconstant, equivalently when its profiled star curvature is
positive; this is radius- and gauge-independent. If \(S([g])\) is not all
nodes, gauge one invisible star to zero. Shared edges then force every other
invisible star to zero, so each nonzero edge lies inside \(S([g])\). A nonzero
persistent distance-\(d\) edge propagates around
an orbit of \(M/\gcd(M,d)\ge p_{\min}(M)\) nodes; the all-visible case gives
the same bound.  An even diameter attains two.  For odd \(M\ge5\), put value
one on one complete orbit with \(d=M/p_{\min}(M)\) and zero elsewhere; exactly
that orbit is visible.  Thus the minimum support is \(p_{\min}(M)\), and every
nonzero odd-\(M\) mode is globally visible iff \(M\) is prime.  The \(M=3\)
exclusion is essential because its three-cycle is the constant quotient.

At zero floor set \(h_{qj}=b_{qj}g_{qj}\), so persistence is divergence
freedom. The all-visible odd bound is immediate; otherwise the preceding gauge
confines every nonzero edge to \(S([g])\). For odd \(M\), \(h\) and \(g\) have
the same support, and a nonzero divergence-free \(h\) contains a cycle, giving
\(|S([g])|\ge3\); a triangle circulation pulled back by \(g=h/b\) attains three
for odd \(M\ge5\). For even \(M\), the same support argument gives two, attained
by one zero-weight diameter.

\section{Critical Replacement-Floor Geometry and Value Continuity}

Let \(P_0\) and \(P_+\) project orthogonally onto \(\NM(0)\) and the
positive-floor space. Persistence gives \(\NM(\lambda)\subseteq\NM(0)\), with
equality for \(M=3,4\) and proper inclusion for \(M\ge5\) by the dimension
formulas. Hence \(\|P_0-P_+\|_{\rm op}\) is respectively zero or one; in the
proper case, use a unit vector in \(\NM(0)\cap\NM(\lambda)^\perp\).

On bounded design balls and compact \(\lambda\)-sets, split radii into
\(r\le\delta\),
\([\delta,R]\), and \(r\ge R\). The respective controls are risk at most
\(\delta^2\), compact continuity of the profiled quadratic, and, by paired
antisymmetry, \(\mu^2\ge cr^4\) with risk at most \(Cr^2e^{-cr^4}\). Taking
parameter limits before \(\delta\downarrow0\) and \(R\uparrow\infty\) proves
uniform continuity of \(F\) on bounded design balls. Since
\([c]=\sqrt\tau[d]\), \(\|[d]\|_{\rm rt}\le1\), places all minima on one
compact ball, compact minimization proves joint continuity of
\(\cM(\tau,\lambda)\). Feasible-set nesting and denominator monotonicity give
weak decrease in \(\tau\) and weak increase in \(\lambda\).

\bibliographystyle{IEEEtran}
\bibliography{references}
\end{document}